\documentclass{article}
\usepackage[utf8]{inputenc} % allow utf-8 input
\usepackage[T1]{fontenc}    % use 8-bit T1 fonts
\usepackage{lys}
\usepackage{booktabs}       % professional-quality tables
\usepackage{nicefrac}       % compact symbols for 1/2, etc.
\usepackage{microtype}      % microtypography

% fonts

% If you set papersize explicitly, activate the following three lines:
% set page geometry
\usepackage[verbose=true,letterpaper]{geometry}
% \AtBeginDocument{
%   \newgeometry{
%     textheight=9in,
%     textwidth=5.5in,
%     top=1in,
%     headheight=12pt,
%     headsep=25pt,
%     footskip=30pt
%   }
% }
\usepackage{fullpage}
\widowpenalty=10000
\clubpenalty=10000
\flushbottom
\sloppy

% If you use natbib package, activate the following three lines:
\usepackage[numbers,sort&compress]{natbib}

\newcommand{\given}[1][]{\:#1\vert\:}

\newcommand{\indicator}{\mathbb{I}}

\newcommand{\normal}[2]{\mathcal{N}(#1,#2)}
\newcommand{\norm}[1]{\lVert #1 \rVert_2}

\newcommand{\ceil}[1]{\lceil #1 \rceil}

\newcommand{\PP}{\mathbb{P}}
\newcommand{\RR}{\mathbb{R}}

\newcommand{\pid}{p}
\newcommand{\rid}{r}
\newcommand{\npaper}{P}
\newcommand{\nreviewer}{R}
\newcommand{\truescore}{x^*}
\newcommand{\esttruescore}{\widehat{x^*}}
\newcommand{\rating}{z}
\newcommand{\score}{y}
\newcommand{\estscore}{\widehat\score}
\newcommand{\estrating}{\widehat\rating}
\newcommand{\theset}{\mathcal{A}}
\newcommand{\nassign}{\kappa}
\newcommand{\nreview}{\mu}
\newcommand{\rank}{\pi}

\newcommand{\objfunc}{f}
\newcommand{\grid}{\Lambda}
\newcommand{\quantizef}{q}

\newcommand{\leftinterval}{a}
\newcommand{\rightinterval}{b}
\newcommand{\bin}{\mathcal{B}}
\newcommand{\group}{\mathcal{G}}
\newcommand{\ngroup}{\mathcal{T}}
\newcommand{\hyperparam}{\lambda}
\newcommand{\smallnum}{\epsilon}
\newcommand{\breratings}{BRE-adjusted-scores}% within reviewer-provided ties}
\newcommand{\adjustedratings}{Partial-rankings-adjusted-scores}

%%%%%%%%%%%%%%%%%%%%%%%%%%%%%%%%%%%%%%%%%%%%%%%%%%%%%%%%%%%%%%%%%%%%%%%%

\title{Combining Rankings and Quantized Scores for Peer Review}
\title{Integrating Rankings into Quantized Scores in Peer Review}

\begin{document}
\date{}
\author{
    Yusha Liu \\
    Carnegie Mellon University \\
    \texttt{yushal@cs.cmu.edu} \\
    \and
    Yichong Xu \\
    Microsoft Cognitive Services Research \\
    \texttt{yichong.xu@microsoft.com} \\
    \and
    Nihar B. Shah \\
    Carnegie Mellon University \\
    \texttt{nihars@cs.cmu.edu} \\
    \and
    Aarti Singh \\
    Carnegie Mellon University \\
    \texttt{aarti@cs.cmu.edu} \\
}
\maketitle
\begin{abstract}
In peer review, reviewers are usually asked to provide scores for the papers. The scores are then used by Area Chairs or Program Chairs in various ways in the decision-making process. The scores are usually elicited in a quantized form to accommodate the limited cognitive ability of humans to describe their opinions in numerical values. It has been found that the quantized scores suffer from a large number of ties, thereby leading to a significant loss of information. To mitigate this issue, conferences have started to ask reviewers to additionally provide a ranking of the papers they have reviewed. There are however two key challenges. First, there is no standard procedure for using this ranking information and Area Chairs may use it in different ways (including simply ignoring them), thereby leading to arbitrariness in the peer-review process. Second, there are no suitable interfaces for judicious use of this data nor methods to incorporate it in existing workflows, thereby leading to inefficiencies.

We take a principled approach to integrate the ranking information into the scores. The output of our method is an updated score pertaining to each review that also incorporates the rankings. Our approach addresses the two aforementioned challenges by: (i) ensuring that rankings are incorporated into the updates scores in the same manner for all papers, thereby mitigating arbitrariness, and (ii) allowing to seamlessly use existing interfaces and workflows designed for scores. 
We empirically evaluate our method on synthetic datasets as well as on peer reviews from the ICLR 2017 conference, and find that it reduces the error by approximately $30\%$ as compared to the best performing baseline on the ICLR 2017 data. 
\end{abstract}

\section{Introduction}\label{sec: intro}
Many applications involve people evaluating a large number of items in a distributed fashion. An important and prominent such application, which is the focus of this paper, is peer review of papers in scientific conferences. A typical way of collecting reviews is through quantized scores, that is, where reviewers are asked to provide scores from a constant number of quantization levels. For example, reviewers for conferences are often asked to provide their opinions on papers in the format of five-level or ten-level Likert items, which are used to evaluate the qualities of submitted papers. 

A drawback of such quantized scores is that there exist a large number of ties in such quantized scores, due to the constant number of quantization levels as well as the respondents' tendency to give equal values in the absence of prompt for relative relationships~\citep{feather1973measurement}. 
 For example, a recent study~\citep{shah2018design} of peer-review data from the NeurIPS 2016 conference found that among all instances where a reviewer reviewed a pair of papers, the pair of review scores provided by the reviewer were tied in more than 30\% of such instances. Such a large proportion of ties were found to exist in the scores for all four criteria elicited from reviewers, and the number of ties was even higher when restricting attention to only the top and middle-quality papers.  Such ties result in a loss of information, thereby contributing to the difficulty in making decisions regarding the acceptance of papers. 
 
 Despite the apparent drawbacks, there are strong reasons that quantized scores are a widely used format instead of continuous-valued scores to elicit the opinions of reviewers. This is because of the limited ability of humans to describe stimuli with numerical values~\citep{miller1956magical}. Psychometric studies have discussed the appropriate number of response alternatives when eliciting responses from humans and suggested that it remain a small constant~\citep{lietz2010research, jones2013optimal}. 
 
 An alternative form of evaluation comprises rankings.  
 Ranking information alone has been demonstrated to be a robust way to collect information~\cite{rankin1980comparison, douceur2009paper, shah2013case,shah2016estimation}. 
 For example, crowdsourcing experiments~\citep{shah2016estimation} demonstrate the reliability of answers in form of pairwise comparisons, which incur a lower per-sample error as compared to eliciting numerical values. 
 
Scientific conferences, which face an increasing number of submissions each year, have made attempts to collect additional  information by asking reviewers to report rankings in addition to scores~\cite{soergel2013open}. Among recent computer science conferences, the NeurIPS 2016 conference asked each reviewer to also rank the papers they were reviewing~\cite{shah2018design}. This was an experiment that provided a sanity check about rankings in peer review and also identified several benefits of collecting rankings. The ICML 2021 conference collected scores as well as rankings from reviewers on their assigned papers. {Note that during the review process, a majority of the reviewers' time is spent on reading and evaluating the papers. As a result, generating rankings in addition to the traditional scores may take only little additional time.}

In Figure~\ref{fig: interface, sep} we illustrate the standard interface used by Area Chairs in peer review, augmented with the ranking information (in the ``rankings'' column). An important aspect of the standard interface, extensively used by chairs in their workflow, is the ability to sort papers via the minimum, maximum, or average received scores or via the spread of the scores. The augmented ranking information shows the (partial or total) rankings provided by reviewers, where papers not being handled by that Area Chair are replaced with a $*$ symbol (as done by ICML 2021). There are multiple challenges of such an interface. First, such ranking information is incompatible with commonly used workflow elements such as sorting according to scores, so any such operations will omit ranking information. Second, it is not clear how to efficiently extract information from the rankings under such an cluttered interface. As a consequence, the varying use or lack of use of ranking information by different Area Chairs can add to the arbitrariness in peer review.  Thus, while the elicitation of ranking evaluations introduces another kind of information for chairs to use, the question of how to judiciously use this data has remained open.

Our goal is to allow the chairs to use the ranking information in addition to the scores while not disrupting their workflow. 
To this end, we design a method to integrate the reviewer-provided rankings into the scores. One may think of this approach as dequantizing the scores. {In our problem formulation (Section~\ref{sec: formulation}), the output is a real-valued score for each review, where these real-valued scores combine the reviewer-provided rankings and (quantized) scores. We refer to these real-valued outputs as the ``dequantized scores''. We illustrate the interface with these dequantized scores in Figure~\ref{fig: interface, deq}. As shown in Figure~\ref{fig: interface, deq}, these dequantized scores can now be incorporated in the standard workflow for Area Chairs or Program Chairs, allowing them to seamlessly perform the tasks that they conventionally perform on the original quantized scores (such as sorting by the average or the spread of the scores for each~paper). }

\begin{figure}[!t]
  \centering
\begin{subfigure}{0.98\textwidth}
\includegraphics[width=\linewidth]{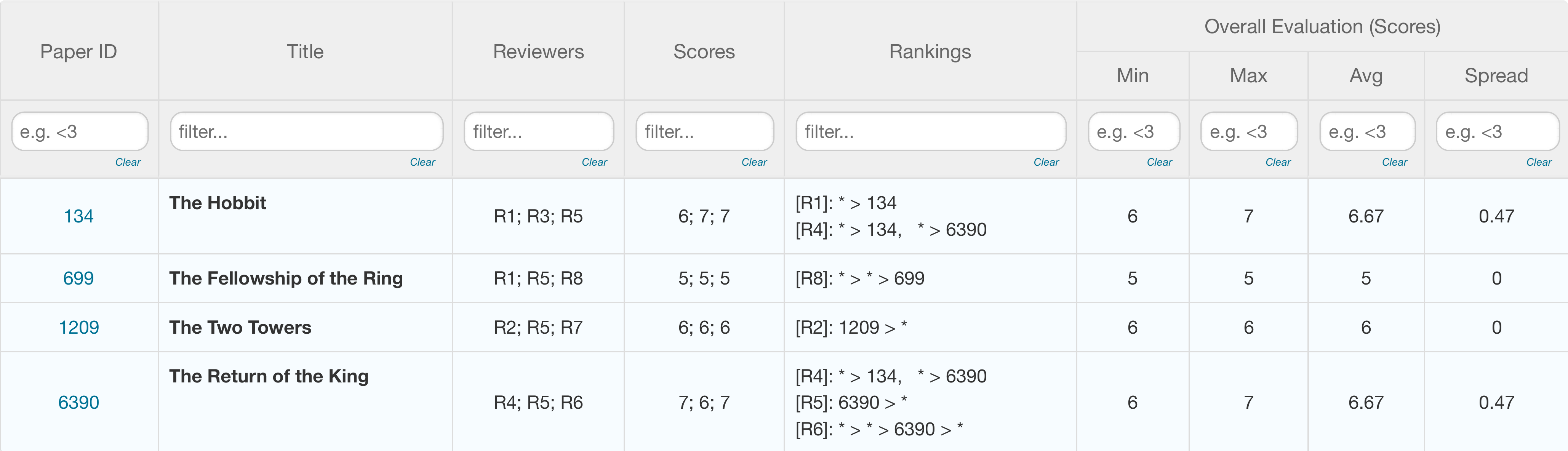}
\caption{Interface with quantized scores and rankings.}
\label{fig: interface, sep}
\end{subfigure}\hfil % <-- added 
\medskip
\begin{subfigure}{0.98\textwidth}
\includegraphics[width=\linewidth]{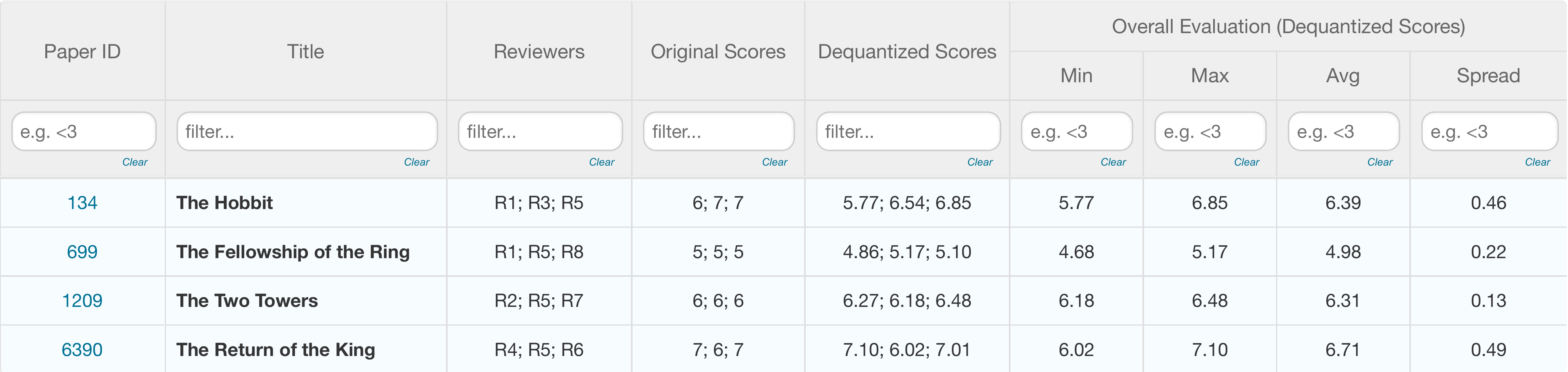}
\caption{Interface with dequantized scores.}
\label{fig: interface, deq}
\end{subfigure}\hfil % <-- added  
\caption{An illustration of the envisaged interfaces for the Area Chairs, a conference management system (Microsoft CMT) used commonly by computer science conferences for peer-review. The top figure shows the interface with separate quantized scores and rankings reported by reviewers. The rankings in the top figure are processed in the same manner as in the ICML 2021 conference, precisely, papers outside the scope of the Area Chair are replaced with $\star$ symbols for confidentiality. The bottom figure shows the interface with our proposed output format of dequantized scores. Each row represents a paper under the scope of the Area Chair, while the columns correspond to relevant information of the paper. }
\label{fig: interfaces}
\end{figure}

\paragraph*{Our contributions.} %\label{sec: contributions}
The main contributions of this paper are as follows. 

(1) We identify and formulate the problem of combining rankings and quantized scores, focusing specifically on dequantization of the scores for every review, rather than aggregating the quantized scores and/or rankings given by reviewers to estimate a score for each paper. We provide detailed motivations for dequantizing the scores in Section~\ref{sec: design principles}. Our method allows that each reviewer is only assigned a small subset of the papers and, further, may only return a partial ranking of the assigned papers.  For concreteness, we focus on the application of peer-review, while noting that the problem and solution may also apply to other settings where both rankings and quantized scores are available.

(2) We propose a computationally-efficient algorithm that outputs a real-valued dequantized score for every review in the assignment. 
\begin{itemize}
    \item Our approach is based on a set of design principles we outline subsequently. We make no parametric assumptions on the data generation process, nor the existence of ground truth quality scores and global rankings.
    \item A part of our approach is inspired by isotonic regression, and we also provide connections to estimation under the Thurstone model and the balanced-rank-estimation algorithm~\citep[Section 4.1]{wauthier2013efficient}.
    \item Our algorithm includes a purely data-driven ``quantization validation (QV)'' method for hyperparameter selection due to the absence of the ground truth.
    %\item Alongside, we also provide guidelines on using the dequantized scores.
\end{itemize}
 
(3) We evaluate the empirical performance of our proposed algorithm on both synthetic data as well as real-world peer review data from the ICLR 2017 conference. {As motivated subsequently in Section~\ref{sec: proposed algorithm}, we use the (Kendall-tau) ranking error as the metric of interest. We compare the proposed algorithm to two baselines: quantized scores and \breratings{}, which are formally introduced in Section~\ref{sec: baselines}}. We find that our algorithm incurs about $30\%$ lower error as compared to the best performing baseline in the ICLR 2017 data.

\section{Related Work}
There is a long line of literature~\citep{wauthier2013efficient, eriksson2013learning, braverman2016parallel, shah2016stochastically,shah2017simple, shah2016estimation, mao2017minimax,negahban2017rank, pananjady2017worst, agarwal2018accelerated,makhijani2019parametric, wang2020stretching}  in the domain of ranking from pairwise comparisons with various modeling assumptions. Methods from this domain take comparison outcomes between pairs of items as input, and output estimates of an underlying global ranking or a comparison probabilities matrix. 
The problems investigated in this domain, however, are fundamentally different from ours with major distinctions in both the input and output of the problem. In our setting, the algorithm needs to effectively incorporate scores in addition to rankings, while the above methods cannot be trivially adapted to take score values as input. In settings such as peer review, scores cannot be neglected, as the ranking information is very limited and sparse. Additionally, the number of comparisons per item needed in most prior work in ranking from pairwise comparisons grows with the number of items (sometimes logarithmically but often linearly). In practice, however, there are often only $3$ to $6$ reviews per paper in most peer-reviewed conferences, and the number of comparisons per paper is a small constant. 
Mao et al. \cite{mao2017minimax}

allow for the number of comparisons to be constant, but still use at least tens of comparisons per item in their simulations, which remains impractical for the peer-review setting. 
 Furthermore, unlike their goals, we make a deliberate design choice to \emph{not} aim to output global ranking or comparisons probabilities of the papers, but instead focus on estimation of dequantized scores for reviewer-paper pairs. This design choice is motivated in Section~\ref{sec: design principles}.

 Some recent works~\citep{hopkins2019power, zeng2020learning, hopkins2020noise, xu2020regression,xu2017classification,xu2020thbandit,xu2020optimization, somers2017efficient} develop approaches to use ranking information in addition to labels (scores) in  supervised tasks such as classification, regression, and optimization. 
 However, there are crucial differences between their settings and ours, such as the preservation of distinct reviewer evaluations, instead of pooling the data together,

 which prohibit any direct translations of those works to our setting. 
 These works consider the generalization setting by building general predictive models from training data that then apply to incoming test data. Whereas our work is in the transduction setting, where we derive the output scores directly from the input quantized scores and rankings.  

The kind of data considered in the paper by Ailon~\cite{ailon2010aggregation} is closer to our setting {despite their goal of deriving a global ranking}, which considers two kinds of input: a total ranking for only a few top-ranked items and quantized scores. However, the method then only uses the partial ranking induced by the quantized scores and discards the actual values of the scores. This step thus leads to a significant loss of information. For instance, a reviewer giving a `strong accept' to two papers is very different from the reviewer giving a `strong reject' to two papers, but their algorithm will not distinguish these two cases. In our setting, such data is equivalent to having only partial rankings but no scores. 

The paper most closely related to our setting is a concurrent and independent work by Pearce and Erosheva~\cite{pearce2022unified}, which also considers a transduction setting with both ranking and score data. They propose a model termed the Mallows-Binomial model, parametric model that jointly captures the scores and rankings. They estimate the model parameters via maximum likelihood estimation and propose two computationally-efficient algorithms based on A* tree search. They present theoretical results including %identifiability of the model and 
properties of the Maximum likelihood estimator (MLE) for model parameters. On the empirical front, they fit the model on a real-world grant panel review dataset where $6$ judges each scored all of the $18$ proposals and ranked their top $6$. They examine the results manually in absence of ground truth and show that the estimated model parameters successfully capture information from both scores and rankings.
The  work of Pearce and Erosheva~\cite{pearce2022unified}, however, differs from ours in several critical ways. First, their method and analysis require that each reviewer reviews the entire set of objects to be evaluated. While this condition may be met in small grant proposal panels such as the American Institute of Biological Sciences grant proposal review studied in~\cite{pearce2022unified}, this is impractical in conference peer-review that has thousands of submitted papers, which is the primary focus of our work. Second, they assume that each reviewer provides a top-K ranking. We allow reviewers to provide comparisons between any arbitrary subsets of assigned papers (Section~\ref{sec: formulation}), which eventually constitutes a partial ranking different from that in~\cite{pearce2022unified}. 
Third, the Mallows-Binomial model assumes the existence of a true underlying quality score for each paper and consequently of a global ranking of the papers, whereas our approach deliberately aims to avoid assumptions about the ground truth scores and about ranking over the papers. Fourth, their model assumes that rankings and quantized scores are independent conditioned on the model parameters. On the contrary, we assume that the rankings provided by any reviewer are consistent -- and hence strongly dependent -- with the quantized scores provided by that reviewer. Our setting occurs more naturally in peer-review, where reviewers report the two types of information at the same time as their final evaluations.  
Fifth, they pool the reviewers' data together to obtain a final quality score for each paper. On the other hand, our approach seeks to provide flexibility to the Area Chairs and Program Chairs on aggregating the reviews for any paper, and hence we output a score for each review. 

Finally, there has been a flurry of works that address various other problems in peer review, such as bias~\citep{tomkins2017reviewer, stelmakh2019testing,manzoor2020uncovering},  miscalibration~\citep{flach2010kdd,roos2011calibrate, wang2018your,tan2021least}, subjectivity~\citep{lee2015commensuration,noothigattu2021loss}, dishonest behavior~\cite{Vijaykumar2020Architecture,littman2021collusion, wu2021making,jecmen2020manipulation,xu2018strategyproof,dhull2022price}, and others. See~\cite{shah2021survey} for a survey. In particular, Noothigattu et al.~\cite{noothigattu2021loss} address the problem of reviewer subjectivity, where their proposed output is similar to that we consider here -- an updated score pertaining to each review.  
  
\section{Problem Setting}\label{sec: formulation}
Consider a set of $\npaper$ papers indexed by $\pid \in [\npaper]$, and a set of $\nreviewer$ reviewers indexed by $\rid \in [\nreviewer]$.\footnote{For any positive integer m, we use the standard notation $[m] = \{1, 2, \dots m\}$.}
Let $\theset$ denote the set of assigned reviewer-paper pairs: $\theset = \{ (\rid,\pid) \in [\nreviewer]\times[\npaper]  \mid \text{reviewer } \rid \text{ reviews paper }\pid \}$. 
For every paper $\pid$ assigned to reviewer $\rid$, 
the reviewer reports a quantized score $\rating_{\rid \pid}\in \mathbb{Z}$ bounded by a pre-specified interval $[\leftinterval, \rightinterval]: \leftinterval, \rightinterval\in\mathbb{Z}$. We assume that the quantization process maps a value in the interval $[\rating- 0.5, \rating+ 0.5]$ to integer $\rating$.  
In our experiments, both the number of papers assigned to a reviewer and the number of reviews received by a paper are constants, which is the case in practice in peer-review, as opposed to scaling with $\npaper$ or $\nreviewer$. 
  
 In addition, the reviewer reports a partial ranking $\rank_\rid$, which is a partial ordering over the set of papers that is assigned to this reviewer. For a pair of papers $(\pid, \pid')$  reviewed by reviewer $\rid$, we use $\pid \succ_{\rid} \pid'$ to mean that reviewer $\rid$ ranks paper $\pid$ above paper $\pid'$ in the partial ranking $\rank_\rid$ provided by reviewer $\rid$. The partial ranking $\rank_\rid$ can be then represented as the set $\rank_\rid = \{(\pid, \pid'): (\rid,\pid)\in \theset, (\rid,\pid')\in \theset, \pid \succ_{\rid} \pid'\}$. 
 We assume that for every reviewer, the reported rankings are consistent with the scores, that is, the reviewer may give the ranking $\pid \succ_{\rid} \pid'$ only when the reported scores satisfy $\rating_{\rid \pid} \geq \rating_{\rid \pid'}$. 
 
 Observe that under our problem formulation, reviewers can choose to report a total ranking of assigned papers, but also have the freedom to not compare some pairs of papers. In peer review, it is not always reasonable to ask for a binary comparison between every pair of papers. For example, as pointed out in~\citep{rogers2020can}, it can be difficult to make a comparison in a situation with ``incomplete evaluation in one borderline paper vs narrow applicability of another''. Allowing partial orderings prevents reviewers from being forced to make such ``apples-to-oranges'' comparisons~\citep{rogers2020can}.

 We now describe our objectives given the inputs of quantized scores and rankings. As introduced in Section~\ref{sec: intro} and detailed in Section~\ref{sec: design principles}, given the quantized scores and rankings from the reviewers, our goal is to merge the two sources of information into a single source, in a manner that can seamlessly assist the human experts. Our designed algorithm thus aggregates the reviewer-provided sources of information into dequantized and continuous scores. We denote the output dequantized scores as $\{\estscore_{\rid\pid} \}_{(\rid, \pid)\in \theset}$.

 \section{Main Results}\label{sec: main results}
We first describe our three design principles in Section~\ref{sec: design principles}. Then in Section~\ref{sec: proposed algorithm}, we present our proposed algorithm designed based on these principles. In Section~\ref{sec: analysis}, we draw interesting connections between our algorithm and simpler algorithms under some extreme cases, as well as a connection to maximum likelihood estimation under the Thurstone model. %\ns{section ref} 
Finally, we present our data-driven `quantization validation' method for hyperparameter selection in Section~\ref{sec: quantization validation}.

\subsection{Approach and Design Principles}\label{sec: design principles}

In this section, we introduce three principles we follow in designing our algorithm. 
 
 The first principle pertains to our approach towards the goal of the algorithm. 
 Unlike many prior works analyzing scores and ranking information which focus on estimation of the ``true'' underlying qualities of each item~\citep{volkovs2012flexible, shah2016estimation, hajek2014minimax, negahban2017rank, wang2020stretching} or the ranking of such assumed ``true'' qualities~\cite{braverman2007noisy, shah2016stochastically, volkovs2012flexible, shah2017simple, pananjady2017worst, mao2017minimax}, 
 we deliberately focus on obtaining a dequantized numerical output corresponding to \emph{each review}. 
  This is because, in the actual peer-review process, the final aggregation depends on various additional factors including the review and rebuttal text, reviewer discussions, and Program/Area Chairs' preferences on how to aggregate individual reviews. The reviewer-provided scores are usually used heavily in the peer-review workflow for sorting or initial judgments but do not fully determine the final decision.  
  Furthermore, we deliberately do \emph{not} want to base our algorithm on the assumption of the existence of some objective true qualities, especially in applications such as peer review that has a significant amount of subjective opinion{~\cite{cortes2021inconsistency}}.  
Thus with our goal of  helping the Program Chairs and Area Chairs seamlessly make their decisions in the real world, we focus on producing a unified representation in form of dequantized scores $\{\estscore_{\rid\pid} \}_{(\rid, \pid)\in \theset}$. This approach of computing updated scores for each review is also taken by Noothigattu et al.~\cite{noothigattu2021loss} in their work on mitigating subjectivity in peer review. 
 
 \begin{principle}[Dequantization of scores and not final decisions]
 \label{principle: dequantization}
 The goal is to produce estimated dequantized scores $\{\estscore_{\rid\pid} \}_{(\rid, \pid)\in \theset}$, and \emph{not} the final decisions on the set of papers. 
 \end{principle}
 
 There are two ways to interpret the dequantized scores 
 $\{\estscore_{\rid \pid}\}_{(\rid,\pid)\in \theset}$, one under a model-based setting and another that does not rely on modeling assumptions. 
 
 (1) Under a model-based setting, outcomes in form of partial rankings and scores are generated from latent and real-valued variables. For example in the commonly-used Thurstone model \cite{thurstone1927law}, the latent variable $\score_{\rid\pid}, {(\rid, \pid)\in \theset}$ represents the inherent opinion of reviewer $\rid$ for paper $\pid$ and is drawn from a normal distribution. Under this modeling assumption,
 $\{\estscore_{\rid\pid}\}_{(\rid,\pid)\in \theset}$ can be interpreted as estimations of ground-truth values $\{\score_{\rid\pid}\}_{(\rid, \pid)\in \theset}$. In other words, it is the scores that reviewers would have given \emph{without} the quantization process, which reflect the ``true'' opinion of the reviewers.
 
 (2) The assumption that reviewers generate some unquantized scores in their mind before giving a (quantized) score, as assumed by models such as the Thurstone model,  may not hold in practice. As discussed earlier in Section~\ref{sec: intro}, people may not be capable of evaluating items with fined-grained measure scales~\citep{miller1956magical, jones2013optimal, lietz2010research, shah2016estimation}. One can alternatively consider  the dequantized scores $\{\estscore_{\rid\pid}\}_{(\rid,\pid)\in \theset}$ as a set of continuous variables that integrate the information from quantized scores and rankings \emph{without} assuming the existence of ground-truth scores. The values of these variables are still compatible with the traditionally score-based decision rules, and other workflow elements used by Area and Program Chairs (Figure~\ref{fig: interfaces}).
 
With this motivation, we now present our second principle which requires the estimates to be consistent with the input.
 \begin{principle}[Consistency] \label{principle: consistency}
 The values of estimation output $\{\estscore_{\rid\pid}\}_{(\rid,\pid)\in \theset}$ must be consistent with 
 (i) reviewer-provided quantized scores, that is, based on the assumption of the quantization process in Section~\ref{sec: formulation}, $\rating_{\rid\pid} - 0.5 \leq \estscore_{\rid\pid} \leq \rating_{\rid\pid}+0.5$ for every $(\rid,\pid)\in \theset$; and  
 (ii)  reviewer-provided rankings, that is, the ordering of $\{\estscore_{\rid\pid}\}_{\pid: (\rid,\pid) \in \theset}$ strictly follows $\rank_\rid$, for every reviewer $\rid$.
 \end{principle}
 
 The above design principles leave us with many possible choices for
 $\estscore_{\rid\pid}, (\rid, \pid)\in\theset$. For instance, consider a scores-only setting where we observe the quantized score $\rating_{\rid\pid}$. For a reviewer-paper pair $(\rid,\pid)\in\theset$, the quantized score $\rating_{\rid, \pid}$ defines only the range of possible values for the dequantized score $\estscore_{\rid,\pid}$. Based on only the principles we have established so far, all output values $\estscore_{\rid\pid}\in[\rating_{\rid\pid} - 0.5, \rating_{\rid\pid}+0.5]$ are equally good. Consequently, we establish another design principle to break ties within this interval.

We use the consensus value among reviewers as the additional signal to break ties for quantized scores. The consensus value for a paper is defined as the average of the scores provided by the reviewers for that paper. Based on only the information from the quantized scores given by other reviewers for that paper, it is natural to envisage that the dequantized score lies closer to the consensus value than away from it. In other words, the consensus signal indicates part of the interval $[\rating_{\rid, \pid}-0.5, \rating_{\rid, \pid}+0.5]$ in which the dequantized score should lie. For example, without any ranking information, {if a reviewer $\rid$ gave a score of $7$ and all the other reviewers gave $7$ as well for paper $\pid$, then the consensus signal from other reviewers offers no additional information within the interval $[6.5, 7.5]$}. However, if the other reviewers all gave quantized scores $4$, then their consensus signal indicates that the dequantized score $\score_{\rid, \pid}$ should lie within $[4, 7] \cap [7-0.5, 7+0.5]$, which is $[6.5, 7]$. 

 \begin{principle}[Consensus]\label{principle: consensus}
 We break ties in scores due to quantization in the direction of reviewer consensus. 
 % prev version: stronger statement with consensus. 
 \end{principle}
How much does the dequantized score move in the direction of consensus? This is governed by a hyperparameter in our algorithm, whose value is chosen in a data-dependent manner.
  
 \subsection{Proposed algorithm}\label{sec: proposed algorithm}
 Our three design principles then lead to our proposed algorithm, Algorithm~\ref{alg: main}. The algorithm
 takes both reviewer-provided rankings and quantized scores as inputs, and outputs dequantized scores $\{\estscore_{\rid \pid}\}_{(\rid,\pid)\in \theset}$ (Design Principle~\ref{principle: dequantization}). The algorithm solves a constrained convex optimization problem. The two constraints ensure that the output values are consistent with the reviewer-provided information (Design Principle~\ref{principle: consistency}). The term $\left(\score_{\rid \pid} - \frac{1}{\vert \{\rid': (\rid',\pid) \in \theset\}\vert} \sum_{\rid': (\rid',\pid) \in \theset} y_{\rid'\pid}\right)^2$ in the objective measures disagreement between reviewers, therefore capturing the consensus  (Design Principle~\ref{principle: consensus}). This term and the term $(\score_{\rid \pid}-\rating_{\rid \pid})^2$ together capture the move towards the consensus, and the amount of movement is determined by a hyperparameter  $\hyperparam>0$. In Section~\ref{sec: quantization validation}, we introduce a proposed cross-validation-like procedure  called quantization-validation for selecting $\hyperparam$. 

\begin{algorithm}
  \caption{Proposed algorithm}
  \label{alg: main}
\emph{Inputs:} Quantized scores $\{\rating_{\rid\pid}\}_{(\rid,\pid)\in\theset}$ and rankings $\{\rank_\rid\}_{\rid\in [\nreviewer]}$ given by reviewers, hyperparameter $\hyperparam$, small value $\smallnum$\\
   \emph{Output:} Solution of the following constrained convex optimization program:
    \begin{align*}
        &\argmin_{\{\score_{\rid \pid}\}_{(\rid,\pid)\in \theset}} ~~ \left(\sum_{\pid\in [\npaper]} ~~ \sum_{\rid: (\rid,\pid) \in \theset}  \left(\score_{\rid \pid} - \frac{1}{\vert \{\rid': (\rid',\pid) \in \theset\}\vert} \sum_{\rid': (\rid',\pid) \in \theset} y_{\rid'\pid}\right)^2  + \hyperparam \sum_{(\rid,\pid) \in \theset} (\score_{\rid \pid}-\rating_{\rid \pid})^2\right), \\ 
        &\text{such that } \score_{\rid \pid} \geq \score_{\rid \pid'} + \smallnum \text{\quad whenever } (\rid,\pid)\in \theset, (\rid,\pid')\in \theset, \text{ and }\pid \succ_{\rid} \pid'; \nonumber\\
        &\text{~~~~~~~~~~~~~~~} \rating_{\rid\pid} - 0.5\leq \score_{\rid\pid}\leq \rating_{\rid\pid}+ 0.5 \quad \forall~(\rid,\pid)\in \theset. \nonumber
    \end{align*}
\end{algorithm}

It remains to specify the parameter $\smallnum$, which can simply be chosen to be a small positive constant. This parameter is only meant to ensure that any partial ranking given by a reviewer is strictly followed.  We pick $\smallnum=0.05$ in our subsequent experiments while noting that our results are robust to the choice of small $\smallnum$ (Appendix~\ref{sec: varying nu}). 
Our algorithm is inspired in part by isotonic regression~\cite{barlow1972theory}: if $\smallnum=0$, then the term $(\score_{\rid \pid}-\rating_{\rid \pid})^2$ in the objective and the resulting constraints $\score_{\rid \pid} \geq \score_{\rid \pid'} ~~\forall~ (\rid,\pid)\in \theset, (\rid,\pid')\in \theset$, simply represent isotonic regression to incorporate the rankings.\footnote{Note that isotonic regression itself cannot break ties between quantized scores for each reviewer.} 

~\\\textbf{The optimization problem. } 
When $\hyperparam >0$, the objective is \emph{strictly convex}. Note that the convexity of the first term follows from the fact that this term is a sum over functions, where each function is a composite of a convex function and an affine function. The set of constraints are linear inequalities, therefore the optimization problem yields a \emph{unique solution}. 
Specifically, it is a convex quadratic programming (QP) problem, and therefore can be solved in time polynomial in the number of reviewers and papers.

~\\\textbf{Using the outputs: Scores or percentiles.} 
We can extract two kinds of information from the output of the algorithm to present to the Program Chairs and/or Area Chairs. The dequantized scores for all reviewer-paper pairs in the assignment $\{\estscore_{\rid \pid}, {(\rid,\pid)\in \theset}\}$ themselves provide a convenient and easy-to-use interface that is much cleaner than showing the quantized scores together with raw rankings, as demonstrated in Figure \ref{fig: interfaces}. The scores produced by the proposed algorithm lie within the quantization intervals provided by quantized scores (Design Principle~\ref{principle: consistency}). Within the intervals, the scores are in arbitrary scale because determining their exact values would essentially require more stringent modeling assumptions than we make in this work. The use of arbitrary scales is not new and is employed in various other applications to express measurements in absence of absolute values~\cite{adinolfi2017photovoltage,bucher2006ekv3,kamat2019absolute, hoffman2004zno}. 

An alternative useful type of information that can be derived from this data is the ranking or percentile of each score $\estscore_{\rid\pid}$, which represents the relative position of evaluation given by $\rid$ to $\pid$ among the entire pool of all review scores (across all assigned reviewer-paper pairs). In the interface in Figure~\ref{fig: interface, deq}, the percentiles can be shown in place of the dequantized scores and enjoy the same compatibility as dequantized scores with the workflow of the chairs. Either of these types of information can then be provided to an Area Chair or Program Chair to help inform decisions across a wider scope of papers that are not accessible to individual reviewers. 

In the sequel, we primarily focus on evaluating the performance in terms of ranking error of the scores for all reviewer-paper pairs. The Kendall-tau ranking error we use is formally defined in Section~\ref{sec: implementation details}. 

\subsection{Analysis of the proposed algorithm in special cases}\label{sec: analysis}
In this section, we analyze several properties of Algorithm~\ref{alg: main}. First, we draw a connection to Balanced Rank Estimation (BRE)  algorithm~\citep{wauthier2013efficient} (Section \ref{sec: connect bre}) under certain assumptions. We then characterize the output by giving its analytical solution when only scores are present (Section \ref{sec: score only}). We also make a connection to the Thurstone model with quantization under the score-only case (Section~\ref{sec:mle_thurstone}). 

\subsubsection{Connection to Balanced Rank Estimation (BRE) when \boldmath{$\hyperparam = \infty$} \label{sec: connect bre} }
Balanced Rank Estimation (BRE)~\citep[Section 4.1]{wauthier2013efficient} is an algorithm for rank recovery with pairwise comparisons which enjoys optimal theoretical guarantees under the standard comparison-only setting, under three assumptions: there is a ground-truth global ranking, the pairs compared are chosen independently and uniformly at random, and each pairwise comparison has an identical noise distribution. 
In a nutshell, BRE first estimates a score for each item. The estimated score is proportional to the difference between the number of items preceding and succeeding this item in the given pairwise comparisons. Then, BRE uses these estimated scores to induce a global ranking of all items and the global ranking is the final output of their method. 

In what follows, we show the connection between our algorithm and BRE under a certain setting. 
In this setting: (i) every reviewer gives a total ranking of all the papers assigned to them, and (ii) $\hyperparam = \infty$ in our algorithm, that is, we retain only the second term in the objective in Algorithm~\ref{alg: main}. 
Without the first term in its objective, the optimization problem in Algorithm~\ref{alg: main} can be solved separately and in closed-form for each reviewer{, and simply corresponds to the problem of breaking ties amongst papers with the same score using comparisons}. For a reviewer-paper pair $(\rid, \pid)\in\theset$, the output $\estscore_{\rid\pid}$ equals the sum of quantized score $\rating_{\rid\pid}$ given by the reviewer and an arbitrary-scale value generated from the ranking $\rank_\rid$. We next describe the connection between this ranking-induced value and the output of the BRE algorithm. Consider the set of pairwise comparisons indicated by the total ranking $\rank_\pid$, and further restrict our attention to the subset of comparisons between paper $\pid$ and papers with the same quantized scores as $\pid$. The ranking-induced value is then proportional to the difference between the number of papers preceding and succeeding $\pid$ in this subset of comparisons. This is the same way that the BRE algorithm estimates the score for paper $\pid$ given the subset of comparisons as input. 

The following proposition characterizes the behavior of the algorithm in the limit of $\hyperparam =\infty$. 
\begin{proposition}\label{prop: infinite lambda}
Assume that reviewers give total rankings of assigned papers, and that $\smallnum$ is a small constant such that the program in Algorithm \ref{alg: main} is feasible. When $\hyperparam=\infty$, our algorithm generates scores separately across different reviewers. For each reviewer, our algorithm adds to the quantized scores values that are proportional to the estimated scores from the Balanced Rank Estimation (BRE)~\citep{wauthier2013efficient} algorithm. 
\end{proposition}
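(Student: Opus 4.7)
The plan is to exploit the fact that, in the $\hyperparam\to\infty$ limit, dividing the objective of Algorithm~\ref{alg: main} by $\hyperparam$ drives the per-paper consensus term to zero, so any limiting optimum minimizes only the quantization-fidelity term $\sum_{(\rid,\pid)\in\theset}(\score_{\rid\pid}-\rating_{\rid\pid})^2$ subject to the original constraints. This residual objective is additively separable over reviewer-paper pairs, while the only constraints coupling distinct pairs are the ranking inequalities, which involve only pairs sharing the same reviewer. The program therefore decomposes cleanly into one independent subproblem per reviewer $\rid$.

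Within a single reviewer's subproblem I would further decompose across groups of papers sharing a common quantized score $\rating$. Because the problem statement assumes rankings are consistent with scores, whenever $\rating_{\rid\pid'}>\rating_{\rid\pid}$ the direction $\pid'\succ_\rid\pid$ is forced, and the constraint $\score_{\rid\pid'}\ge\score_{\rid\pid}+\smallnum$ is implied by the box constraints once $\smallnum$ is small enough (the intervals $[\rating_{\rid\pid}-0.5,\rating_{\rid\pid}+0.5]$ and $[\rating_{\rid\pid'}-0.5,\rating_{\rid\pid'}+0.5]$ touch only at a single endpoint, which the feasibility hypothesis on $\smallnum$ lets me separate at the cost of an $O(\smallnum)$ shift). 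The subproblem thus splits into independent within-group problems.

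Inside a group $G$ of $m$ papers with common quantized score $\rating$, ranked by the reviewer as $\pid_1\succ_\rid\pid_2\succ_\rid\cdots\succ_\rid\pid_m$, I minimize $\sum_{i=1}^m(\score_i-\rating)^2$ subject to $\score_i\ge\score_{i+1}+\smallnum$ and $\score_i\in[\rating-0.5,\rating+0.5]$. I would verify directly from the KKT conditions (introducing nonnegative multipliers $\mu_i$ for the adjacent-gap constraints) that every adjacent gap binds at the optimum, giving $\score_i=\score_1-(i-1)\smallnum$; the stationarity equation in $\score_1$ then yields $\score_1=\rating+\smallnum(m-1)/2$, hence $\score_i=\rating+\smallnum(m-2i+1)/2$, and the resulting multipliers $\mu_j=\smallnum\,j(m-j)$ are nonnegative, certifying optimality via strict convexity.

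Finally, applying BRE to the pairwise comparisons induced by $\rank_\rid$ restricted to $G$, paper $\pid_i$ has $(m-i)$ succeeding and $(i-1)$ preceding papers, so BRE assigns it a score proportional to $(m-i)-(i-1)=m-2i+1$. Comparing with the derived closed form, the value that Algorithm~\ref{alg: main} adds to $\rating_{\rid\pid_i}$ is exactly $\smallnum/2$ times this BRE score, establishing the claimed proportionality. The main technical obstacle I expect is the boundary bookkeeping of the second paragraph, showing that inter-group ranking constraints remain slack in the regime of $\smallnum$ guaranteed by feasibility so that the cross-group decomposition is exact; the remaining steps are routine strictly-convex quadratic programming.
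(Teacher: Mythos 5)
Your proposal is correct and follows essentially the same route as the paper's proof in Appendix~\ref{sec: reduction to BRE}: drop the consensus term at $\hyperparam=\infty$, decompose per reviewer and then per quantization bin, solve the within-bin least-squares chain to get $\estscore_{\rid\pid_i}=\rating+\smallnum(m-2i+1)/2$, and identify the added term with $\smallnum/2$ times BRE's preceding-minus-succeeding count. The only differences are that you certify via KKT multipliers $\mu_j=\smallnum\, j(m-j)$ that the adjacent-gap constraints bind (a step the paper asserts by directly positing the equality chain) and that you flag the cross-bin slackness bookkeeping explicitly, both of which slightly tighten rather than change the argument.
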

Therefore, we refer to the special case which our algorithm reduces to when $\lambda=\infty$ and reviewers provide total rankings of assigned papers as \breratings{}.
We provide further details of this reduction and the full algorithmic description of \breratings{} (Algorithm~\ref{alg: bre ratings}) in Appendix~\ref{sec: reduction to BRE}. 
This reduction is of interest since it shows that when $\hyperparam=\infty$, the proposed algorithm still makes reasonable use of the provided rankings to incorporate into quantized scores.

\subsubsection{Analytical solution when only scores are available \label{sec: score only}}
{While the solution of Algorithm \ref{alg: main} can be complicated to analyze, it is much easier when we only use scores. Without the ranking information, the only constraints on the output scores
are $\rating_{\rid\pid} - 0.5\leq \score_{\rid\pid}\leq \rating_{\rid\pid}+ 0.5, \forall (\rid, \pid)\in \theset$. We can then get an analytical solution for Algorithm \ref{alg: main} as below. } 
\begin{proposition}\label{prop: only ratings}
Suppose reviewers only provide quantized scores and no rankings, and each paper receives $\nreview$ scores. Then for any reviewer-paper pair $(\rid, \pid)$
the scores $\{\estscore_{\rid\pid}\}_{(\rid, \pid)\in\theset}$  output by Algorithm~\ref{alg: main}
can be written in closed form as
\newcommand{\weightedscore}{\widetilde{y}}
\begin{align*}
\weightedscore &= 
   \frac{1+\nreview\hyperparam}{\nreview(1+\hyperparam)}\rating_{\rid\pid} + \sum_{\rid': (\rid', \pid)\in \theset, \rid'\neq\rid}~\frac{1}{\nreview(1+\hyperparam)}\rating_{\rid'\pid}\\
  \estscore_{\rid\pid} & = 
  \begin{cases} 
  \weightedscore & \qquad \text{if} \quad \weightedscore \in [ \rating_{\rid\pid}-0.5, \rating_{\rid\pid}+0.5]\\
   \rating_{\rid\pid}-0.5 & \qquad \text{if} \quad \weightedscore <\rating_{\rid\pid}-0.5\\
    \rating_{\rid\pid}+0.5 & \qquad \text{if} \quad \weightedscore >\rating_{\rid\pid}+0.5.
  \end{cases}
\end{align*}
\end{proposition}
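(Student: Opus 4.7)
The plan is to reduce the global program of Algorithm~\ref{alg: main} to $\npaper$ decoupled per-paper subproblems, solve each subproblem by computing the stationary point of a strictly convex quadratic, and then incorporate the box constraints by coordinatewise projection. Since no rankings are provided, the ordering constraints $\score_{\rid\pid} \ge \score_{\rid\pid'} + \smallnum$ are vacuous, and the objective in Algorithm~\ref{alg: main} separates into an independent term for every paper $\pid$. Fix one paper $\pid$ and, to lighten notation, write $y_1,\dots,y_\nreview$ and $z_1,\dots,z_\nreview$ for the dequantized and quantized scores corresponding to its $\nreview$ reviewers. The subproblem becomes
\begin{equation*}
\min_{y\in\RR^\nreview}\; \sum_{i=1}^\nreview (y_i - \bar y)^2 + \hyperparam \sum_{i=1}^\nreview (y_i - z_i)^2 \quad\text{subject to}\quad z_i - 0.5 \le y_i \le z_i + 0.5,
\end{equation*}
where $\bar y = \nreview^{-1}\sum_i y_i$. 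Each subproblem is a small strictly convex QP, so it has a unique solution that can be analyzed in isolation.

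Next I compute the unconstrained minimizer. Using $\partial \bar y / \partial y_j = 1/\nreview$ together with the identity $\sum_i (y_i - \bar y) = 0$, the gradient of the consensus term collapses to $2(y_j - \bar y)$; setting the full gradient to zero gives the linear system $(1+\hyperparam) y_j = \bar y + \hyperparam z_j$ for every $j$. Averaging these equations over $j$ forces $\bar y = \bar z$, so the unconstrained optimum is $y_j^{\star} = (\bar z + \hyperparam z_j)/(1+\hyperparam)$. A short algebraic rearrangement, splitting $\bar z$ as $z_j/\nreview + \sum_{i\neq j} z_i/\nreview$ and recombining the coefficient on $z_j$, shows that $y_j^{\star}$ is precisely the quantity $\weightedscore$ in the statement for the pair $(\rid,\pid)$ with $z_j = \rating_{\rid\pid}$. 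An equivalent, slightly slicker derivation writes the objective as $y^\top(M+\hyperparam I)y - 2\hyperparam z^\top y + \text{const}$ with $M = I - J/\nreview$ and inverts $M+\hyperparam I$ using the spectral decomposition of $M$ (eigenvalue $0$ on $\mathbf{1}$, eigenvalue $1$ on its orthogonal complement).

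The last step is to impose the box constraints. Because each constraint involves only one variable, I propose to project $y^{\star}$ coordinatewise onto $[z_i-0.5,\,z_i+0.5]$, which produces exactly the three-case expression for $\estscore_{\rid\pid}$ in the proposition. To certify that this coordinatewise projection is the constrained optimum, I intend to verify the KKT conditions at the clipped point: the gradient of $F$ must vanish at interior coordinates and point outward at coordinates on a boundary. The main obstacle is that the mean $\bar y$ at the clipped point need not equal $\bar z$, so the coordinate equations do not decouple as cleanly as in the unconstrained step; in particular, the interior stationarity equation $(1+\hyperparam)y_j = \bar y^{\text{clip}} + \hyperparam z_j$ involves $\bar y^{\text{clip}}$ rather than $\bar z$. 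I expect the delicate part of the proof to be a sign/monotonicity argument that leverages the fact that coordinates clipped upward are exactly those whose $z_i$ is far below $\bar z$ (and symmetrically for downward clipping), so that $\bar y^{\text{clip}}$ shifts in the direction that preserves the required KKT inequalities at each boundary index rather than violating them.
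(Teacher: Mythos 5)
Your reduction to independent per-paper subproblems and your computation of the unconstrained minimizer $y_j^{\star}=(\bar z+\hyperparam z_j)/(1+\hyperparam)$ coincide exactly with the paper's own proof (Appendix~\ref{sec: reduction to weighted avg}), and that part is correct. The problem is the final step, which you have only sketched: you rightly observe that certifying the coordinatewise clipping requires KKT conditions in which the interior stationarity equations involve the post-clipping mean rather than $\bar z$, and you defer this to a hoped-for ``sign/monotonicity argument.'' No such argument exists, because the claim it would certify is false. Take $\nreview=3$, $\hyperparam=1$, and quantized scores $(z_1,z_2,z_3)=(0,3,5)$ for a single paper, so $\bar z=8/3$ and $y_j^{\star}=(\bar z+z_j)/2$, giving $(4/3,\,17/6,\,23/6)$; clipping yields $(0.5,\,17/6,\,4.5)$. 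But with $y_1=0.5$ and $y_3=4.5$ held at their (correct) active bounds, stationarity in the free coordinate reads $2y_2=\bar y+z_2$ with $\bar y=(5+y_2)/3$, which gives $y_2=14/5=2.8\neq 17/6$, and the multipliers at $y_1,y_3$ have the right signs, so $(0.5,\,2.8,\,4.5)$ is the true minimizer and the proposition's closed form is not. Coordinatewise projection of an unconstrained minimizer onto a box recovers the constrained minimizer only when the Hessian is diagonal; here the consensus term couples the coordinates.

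You should also know that the paper's proof commits the same leap: it asserts that because the objective is convex in each variable with the others fixed, the constrained solution is the coordinatewise-closest feasible point to the unconstrained optimum. That inference is invalid for precisely the reason you anticipated. The proposition is exactly correct only in the regime where no coordinate needs clipping (i.e., $\widetilde{y}\in[\rating_{\rid\pid}-0.5,\rating_{\rid\pid}+0.5]$ for every reviewer of the paper), where your argument --- and the paper's --- is already complete. So the honest conclusion is not that your plan has a closable gap, but that the clipped branch of the claimed closed form is wrong in general and the statement would need to be weakened (for instance, to the no-clipping case, or to the observation that the true solution is determined by a small per-paper QP whose active set need not be the naive one).
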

The full proof is provided in Appendix~\ref{sec: reduction to weighted avg}. 
The output $\{\estscore_{\rid\pid}\}_{(\rid,\pid)\in \theset}$ is a convex combination of scores for paper $\pid$, where the weights depend on the constant $\mu$ as well as the hyperparameter $\hyperparam>0$. Precisely, the relative weight on the quantized score given by reviewer $r$ is $1+\mu \lambda$ whereas the weights on every other reviewer are $1$ (note that the number of reviews per paper $\nreview$ is typically a small constant).
Assume that the scores of paper $\pid$ are not the same across all reviewers. Then, as the hyperparameter $\hyperparam$ increases and the weight on consensus decreases, we can see from the reduction in Proposition~\ref{prop: only ratings} that the estimated score $\{\estscore_{\rid\pid}\}$ approaches $\{\rating_{\rid\pid}\}$ as one should expect. As a further special case, when the scores received by any paper $\pid$ from all reviewers are identical to, say, $\rating_\pid$, the estimates $\estscore_{\rid\pid}$ also reduce to $\rating_\pid$ as one may expect. When $\lambda = \infty$, the output is simply $\rating_{\rid\pid}$. 
This result proves reasonable behavior from the proposed algorithm when only score information is available.

\subsubsection{Connection to Thurstone model with quantization when only scores are available \label{sec:mle_thurstone}} 
We now draw a connection between the consensus objective used in Algorithm~\ref{alg: main} and the Thurstone model with quantization. In this section, we retain the assumption in Proposition~\ref{prop: only ratings} that only scores are provided and focus on the consensus objective.
The Thurstone model~\citep{thurstone1927law} is a widely used statistical model. For example, it is used for modeling peer grading in MOOCS in~\cite{piech2013tuned}. In the Thurstone model with quantization, scores are generated by the following process. 
Each paper is assumed to have an underlying true quality score $\truescore_\pid$. The latent evaluation score $\score_{\rid\pid}$ given by reviewer $\rid$ to paper $\pid$ is generated from a normal distribution, whose mean is the underlying true score of paper $\pid$. That is, $\score_{\rid\pid} \sim \normal{\truescore_\pid}{\sigma^2}$, for some value $\sigma$ that represents the standard deviation. Our setting involves quantization, we then assume the observed scores $\{\rating_{\rid\pid}\}_{(\rid,\pid)\in \theset}$ are quantized scores obtained by quantizing the latent scores $\{\score_{\rid\pid}\}_{(\rid,\pid)\in \theset}$ to integers, that is,
$\rating_{\rid\pid} = \lfloor\score_{\rid\pid}\rceil, (\rid,\pid)\in \theset$. \footnote{Let $\lfloor \cdot \rceil$ be the mapping from a number to the nearest integer. }
 We focus on analysis of variables $\{\score_{\rid\pid}\}_{(\rid, \pid)\in\theset}$.
Since the quantization process from $\score$ to $\rating$ is itself deterministic, we consider the joint likelihood $\PP(\rating, \score; \truescore)$. To maximize the likelihood, we take the maximization over both $\truescore, \score$. We focus on the solutions for $\score$, which we aim to connect our output to. The connection between the consensus objective (which uses scores) and the log-likelihood under the Thurstone model with quantization is made in the following proposition. 
\begin{proposition}\label{prop: consensus in Thurstone}
The maximizer $\{\score_{\rid\pid}\}_{(\rid,\pid)\in \theset}$ of the consensus objective has the largest likelihood under the Thustone model with quantization, that is,
\begin{align*}
  &\argmin_{\{\score_{\rid \pid}\}_{(\rid,\pid)\in \theset}}\quad 
  \sum_{\pid \in [\npaper]} \quad 
  \sum_{\rid: (\rid,\pid) \in \theset} ~ \left(\score_{\rid \pid} - \frac{1}{\vert \{\rid': (\rid',\pid) \in \theset\}\vert} \sum_{\rid': (\rid',\pid) \in \theset} y_{\rid'\pid}\right)^2\\
  &\qquad\qquad\qquad = \argmax_{\{\score_{\rid\pid}\}_{(\rid,\pid)\in \theset}}~ \max_{\{\truescore_\pid\}_{\pid   \in [\npaper]}} ~ \log \PP(\{\rating_{\rid\pid}\}_{(\rid,\pid)\in \theset}, \{\score_{\rid\pid}\}_{(\rid,\pid)\in \theset}; \{\truescore_\pid\}_{\pid  \in [\npaper]}) \\
  &\text{such that }\rating_{\rid\pid} - 0.5\leq \score_{\rid\pid}\leq \rating_{\rid\pid}+ 0.5 \quad \forall~ {(\rid,\pid)\in \theset}.\nonumber
\end{align*}

\end{proposition}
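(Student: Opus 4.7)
The plan is to manipulate the joint log-likelihood under the Thurstone model with quantization and show that, after maximizing out the latent true qualities $\{\truescore_\pid\}$, the remaining optimization in $\{\score_{\rid\pid}\}$ reduces (up to a positive multiplicative constant and an additive constant) to the consensus objective, with identical feasibility constraints.

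First I would write the joint density explicitly. Under the Thurstone model with quantization, $\score_{\rid\pid} \sim \normal{\truescore_\pid}{\sigma^2}$ independently, and $\rating_{\rid\pid}$ is the deterministic function $\lfloor \score_{\rid\pid}\rceil$. Hence $\PP(\rating_{\rid\pid}\given \score_{\rid\pid}) = \indicator[\rating_{\rid\pid} - 0.5 \le \score_{\rid\pid}\le \rating_{\rid\pid}+0.5]$, and the joint log-likelihood factorizes as
\begin{align*}
\log \PP(\{\rating_{\rid\pid}\},\{\score_{\rid\pid}\};\{\truescore_\pid\})
= \sum_{(\rid,\pid)\in\theset}\!\!\log\indicator[\rating_{\rid\pid}{-}0.5\le \score_{\rid\pid}\le \rating_{\rid\pid}{+}0.5]
-\frac{1}{2\sigma^2}\sum_{(\rid,\pid)\in\theset}(\score_{\rid\pid}-\truescore_\pid)^2 + C,
\end{align*}
where $C$ collects terms independent of the optimization variables. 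Restricting attention to $\{\score_{\rid\pid}\}$ satisfying $\rating_{\rid\pid}-0.5\le \score_{\rid\pid}\le \rating_{\rid\pid}+0.5$ (which is exactly the constraint set on both sides of the claimed equality), the indicator terms are automatically satisfied and contribute $0$, so the right-hand side of the proposition becomes
\[
\argmax_{\{\score_{\rid\pid}\}} \; \max_{\{\truescore_\pid\}} \; \Bigl[-\tfrac{1}{2\sigma^2}\sum_{\pid\in[\npaper]}\sum_{\rid:(\rid,\pid)\in\theset}(\score_{\rid\pid}-\truescore_\pid)^2\Bigr].
\]

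Next I would carry out the inner maximization over $\{\truescore_\pid\}$ in closed form. The objective decouples across $\pid$, and for each $\pid$ the term $\sum_{\rid:(\rid,\pid)\in\theset}(\score_{\rid\pid}-\truescore_\pid)^2$ is a strictly convex quadratic in the scalar $\truescore_\pid$, minimized at
\[
\truescore_\pid \;=\; \frac{1}{\lvert\{\rid':(\rid',\pid)\in\theset\}\rvert}\sum_{\rid':(\rid',\pid)\in\theset}\score_{\rid'\pid}.
\]
Plugging this minimizer back in gives the standard identity that the sum of squared deviations from the optimal mean equals the sum of squared deviations from the sample mean, so the inner maximum equals
\[
-\tfrac{1}{2\sigma^2}\sum_{\pid\in[\npaper]}\sum_{\rid:(\rid,\pid)\in\theset}\Bigl(\score_{\rid\pid}-\tfrac{1}{\lvert\{\rid':(\rid',\pid)\in\theset\}\rvert}\sum_{\rid':(\rid',\pid)\in\theset}\score_{\rid'\pid}\Bigr)^2.
\]

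Finally, I would flip the sign and discard the positive multiplicative factor $\tfrac{1}{2\sigma^2}$, which does not affect the argmax, to conclude that maximizing the remaining expression over $\{\score_{\rid\pid}\}$ subject to $\rating_{\rid\pid}-0.5\le \score_{\rid\pid}\le \rating_{\rid\pid}+0.5$ is equivalent to minimizing the consensus objective on the left-hand side subject to the same constraint. This yields the claimed equality of argmin and argmax sets. The main substantive step is the inner closed-form maximization over $\{\truescore_\pid\}$; the rest is bookkeeping. I do not anticipate a real obstacle, since the only subtlety is handling the indicator factors in the likelihood, which I would absorb into the feasibility set that is explicitly part of both optimization problems.
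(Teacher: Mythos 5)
Your proposal is correct and follows essentially the same route as the paper's proof: factor the joint likelihood into the deterministic quantization indicator (absorbed into the box constraints) times the Gaussian factor, maximize out each $\truescore_\pid$ in closed form as the sample mean of the scores for paper $\pid$, and plug back in to recover the consensus objective up to sign and a positive constant. No gaps.
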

The proof of this proposition is provided in Appendix~\ref{sec: consensus in Thurstone}. We thus see that our consensus principle (Design Principle~\ref{principle: consensus}) also follows if we consider a standard parametric model.

\subsection{Hyperparameter selection via Quantization Validation (QV)}\label{sec: quantization validation}
In this section, we introduce a novel cross-validation-like method to choose the appropriate value of hyperparameter $\hyperparam> 0$ in Algorithm~\ref{alg: main}. The selected value of $\hyperparam$ is then given as an input to the algorithm. 
In a hypothetical situation where we were to observe ground truth values $\score_{\rid\pid}$ for some pairs of $(\rid, \pid)\in\theset$, it would be possible to perform traditional cross-validation where we choose the value of $\hyperparam$ that achieves the best performance in estimation of the observed values of $\score_{\rid\pid}$ on a holdout set. However, in practice, we only have access to the quantized scores. This motivates us to design a procedure to select $\hyperparam$, which performs validation on a dataset constructed by further quantization of the observations. We call this procedure Quantization Validation (QV).

\begin{algorithm}[h]
  \caption{Quantization-validation (QV)}
  \label{alg: validation}
 \begin{algorithmic}[1]
    \REQUIRE \emph{Inputs:} observations of quantized scores $\{\rating_{\rid\pid}\}_{(\rid,\pid)\in\theset}$, rankings $\{\rank_\rid\}_{\rid\in [\nreviewer]}$, set of possible values of the hyperparameter $\grid$, quantization function $\quantizef$, loss function {\tt loss}. 
    \STATE Calculate further-quantized scores $\rating'_{\rid\pid} = q(\rating_{\rid\pid}),~\forall (\rid,\pid)\in \theset$. 
    \STATE Calculate rankings \\
    $\qquad \rank'_\rid = \{(\pid, \pid'): \rating_{\rid\pid}>\rating_{\rid\pid'}, (\rid,\pid)\in \theset, (\rid,\pid')\in \theset\}.$
    \FOR{$\hyperparam \in \grid$}
    \STATE Obtain solution $\{\estrating_{\rid\pid}\}_{(\rid,\pid)\in \theset}$ by calling Algorithm~\ref{alg: main} with $\hyperparam$, and inputs $\{\rating'_{\rid\pid}\}_{(\rid,\pid)\in \theset}$ and $\{\rank'_\rid\}_{\rid\in [\nreviewer]}$. 
     
    \STATE Compute the validation error $e_\hyperparam = \texttt{loss}(\rating, \estrating)$, where $\rating$ is the vector of $\{\rating_{\rid\pid}\}_{(\rid,\pid)\in\theset}$ and $\estrating$ of $\{\estrating_{\rid\pid}\}_{(\rid,\pid)\in \theset}$. \label{line: error in QV} 
    \ENDFOR

    \STATE \emph{Output} selected value of hyperparameter $\hyperparam \in \argmin_{\hyperparam\in\grid} e_\hyperparam $. \qquad(Ties are broken in favor of the  smallest value.)
  \end{algorithmic}
\end{algorithm}

We present our QV procedure in Algorithm~\ref{alg: validation}.
In words, given a set of possible values for $\hyperparam$, we first construct a validation set by further coarsening the observed quantized scores using fewer quantization bins. Then we select the best value of $\hyperparam$ that achieves the lowest loss in the recovery of the original quantized scores, via a function of our choice $\texttt{loss}: \RR^{\vert \theset\vert} \times \RR^{\vert \theset\vert} \rightarrow \RR$. We use Kendall-tau ranking error as the loss function, which is defined formally in the experiments Section~\ref{sec: implementation details}.
Specifically, we select a quantization function $\quantizef: \RR \rightarrow \RR$ , to convert the quantized score $\rating_{\rid\pid}$ to $\rating'_{\rid\pid}, \forall (\rid,\pid)\in \theset$ that has fewer quantization levels.  The rankings are then re-computed from the quantized scores $\{\rating_{\rid\pid}\}_{(\rid,\pid)\in \theset}$ as $\{\rank'_\rid\}_{\rid\in [\nreviewer]}$.  We validate on the dataset consisting of $\{\rating'_{\rid\pid}\}_{(\rid,\pid)\in \theset}$ and $\{\rank'_\rid\}_{\rid\in [\nreviewer]}$, where the goal is to recover $\{\rating_{\rid\pid}\}_{(\rid,\pid)\in \theset}$. {Given a pre-specified set $\grid$ of candidate values of $\hyperparam$}, we compare the algorithm output $\{\estrating_{\rid\pid}\}_{(\rid,\pid)\in \theset}$ under each value of $\hyperparam$ with its ground truth $\{\rating_{\rid\pid}\}_{(\rid,\pid)\in \theset}$, and compute the ranking error. Finally, we select the value of $\hyperparam$ as the one that induces the smallest ranking error in the validation process, where ties are broken in favor of the smallest value.

\section{Experiments}\label{sec: experiments}
We evaluate the empirical performance of our proposed algorithm on simulated data as well as on real-world data collected from the peer review process of the  ICLR 2017 conference. The code for our algorithms and results is available online at  \url{https://github.com/MYusha/rankings_and_quantized_scores}.

\subsection{Implementation details}\label{sec: implementation details}
For each experiment, we run 20 trials and plot the mean and standard error of the mean. We treat differences less than $10^{-4}$ in values of $\estscore_{\rid\pid}$ as ties. 

\paragraph{Proposed algorithm.}
As discussed in Section~\ref{sec: proposed algorithm}, Algorithm~\ref{alg: main} solves a strictly convex optimization problem with linear inequalities. We use CVXPY to obtain the solutions: The solver we used is CVXOPT with the tolerance for feasibility conditions (feastol) set as $10^{-6}$. The constant $\smallnum$ which enforces the strict inequality constraints is set as $0.05$. Note that the algorithm is robust to the choice of $\smallnum$ (as long as it is small, such that the problem in Algorithm~\ref{alg: main} is feasible): we present experiments demonstrating its robustness to the choice of $\smallnum$ in Appendix~\ref{sec: varying nu}.

\paragraph{Quantization validation.}
We use an exponential grid for candidate values of $\hyperparam$ in Quantization validation: $\grid = \{\exp(t/4): 0 \leq t < 40, t\in \mathbb{Z}\}$. The quantization function is set to $\quantizef(\cdot) = \ceil{\frac{\cdot}{2}}$.

\paragraph{Performance measure.}
We use the (normalized) Kendall-tau ranking error as the loss function. {Intuitively, given two rankings on the same set of elements, the Kendall-tau ranking error measures the number of pairs of items whose relationships are reversed in the two rankings.} More formally, we define $\score$ to be the vector of $\{\score_{\rid\pid}\}_{(\rid, \pid)\in \theset}$ and $\estscore$ to be the vector of $\{\estscore_{\rid\pid}\}_{(\rid, \pid)\in \theset}$. Then the normalized Kendall-tau ranking error between $\score$ and $\estscore$ is defined as the following. 
\begin{align*}
  \frac{1}{\vert \{((\rid,\pid), (\rid', \pid')): \score_{\rid\pid} \!>\! \score_{\rid'\pid'}\}\vert }
  \sum_{(\rid, \pid), (\rid',\pid'):\score_{\rid\pid} > \score_{\rid'\pid'} }  \left(\indicator (\estscore_{\rid\pid} \prec \estscore_{\rid'\pid'}) \!+\!\frac{1}{2}  \indicator (\estscore_{\rid\pid} = \estscore_{\rid'\pid'})\right).
\end{align*}
Note that the pairs $\left((\rid, \pid), (\rid', \pid')\right)$ which are tied in $y$ are omitted from the computation above.  In addition to the ranking error, we also measure the $\ell_2$-error between the output values $\{\estscore_{\rid\pid}\}_{(\rid, \pid)\in \theset}$ and the ground truth values  $\{\score_{\rid\pid}\}_{(\rid,\pid)\in\theset}$ as $\sqrt{\sum_{\rid, \pid: (\rid, \pid)\in \theset}\vert\score_{\rid, \pid} - \estscore_{\rid, \pid} \vert^2 }$. 

\subsection{Baseline methods}\label{sec: baselines}
We compare the dequantized scores output by our algorithm with the outputs of the following two natural baseline methods.
\begin{enumerate}
  \item Quantized scores: We simply use the observed scores  $\{\rating_{\rid\pid}\}_{(\rid,\pid)\in \theset}$ as the final output.
  \item \breratings{} (Algorithm~\ref{alg: bre ratings}): In Section~\ref{sec: analysis}, we have shown that this algorithm is equivalent to our proposed algorithm when $\lambda\rightarrow\infty$ and reviewers provide total rankings of assigned papers. In ICLR 2017 dataset, reviewers may provide partial rankings instead. We make simple adjustments to the baseline to accommodate, which are explained in Appendix~\ref{sec: additional algs}. For simplicity, we refer to this baseline in the ICLR 2017 dataset as \breratings{} as well, throughout this section. 
\end{enumerate}

\begin{figure}
  \centering % <-- added
\begin{subfigure}{0.32\textwidth}
\includegraphics[width=\linewidth]{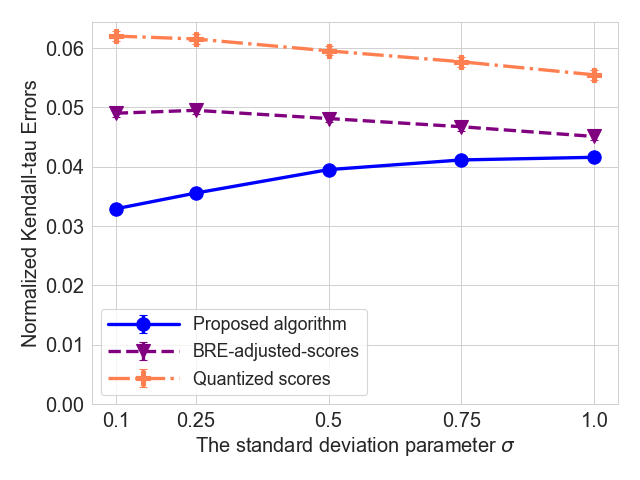}
\caption{Varying standrad deviation $\sigma$.}
\label{fig: sim, sigma}
\end{subfigure}
\hfil % <-- added
\begin{subfigure}{0.32\textwidth}
\includegraphics[width=\linewidth]{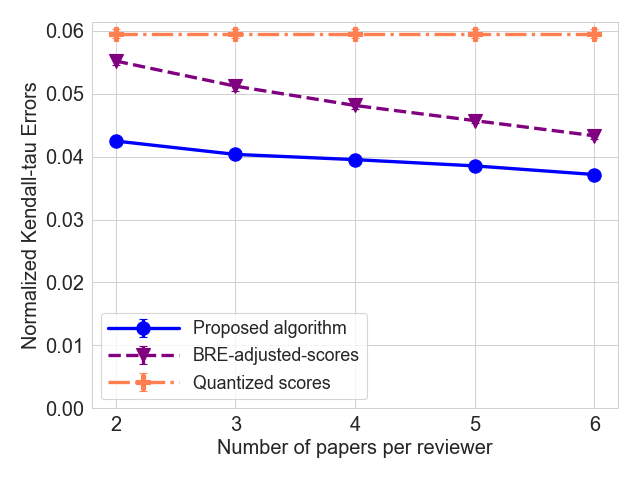}
\caption{Varying number of papers per reviewer.}
\label{fig: sim, nassign}
\end{subfigure}
\begin{subfigure}{0.32\textwidth}
\includegraphics[width=\linewidth]{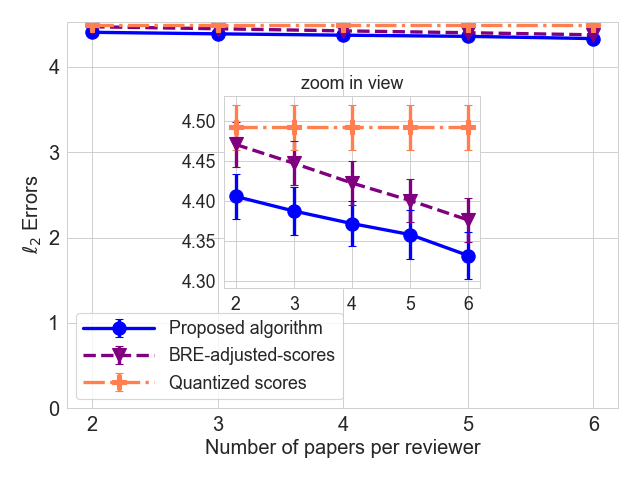}
\caption{$\ell_2$-error. Varying number of papers per reviewer. }
\label{fig: l2 errors pt 1, sim}
\end{subfigure}\hfil % <-- added
\caption{Experiment results on synthetic data with varying levels of noise (standard deviation $\sigma$) and number of papers assigned to each reviewer.}
\end{figure}

\subsection{Synthetic dataset}\label{sec: sim dataset}
We evaluate the performance of the proposed algorithm when the data is generated from a Thurstone model. For $\npaper$ papers to evaluate, their true scores are drawn independently from the uniform distribution $\truescore_\pid\sim \text{Unif}[1,9]~\forall \pid\in[\npaper]$. The latent unquantized score $\score_{\rid\pid}$ that is given by reviewer $\rid$ to paper $\pid$ is drawn from the normal distribution: $\score_{\rid\pid}\sim \normal{\truescore_\pid}{\sigma^2}$, where the parameter $\sigma$ represents the standard deviation of the model, then clipped by $[0,10]$. The quantized scores are then generated by rounding $\{\score_{\rid\pid}\}_{(\rid,\pid)\in \theset}$ to the nearest integer such that $0\leq \rating_{\rid\pid}\leq 10, \rating_{\rid\pid} \in \mathbb{Z},~\forall (\rid,\pid)\in\theset$. {For simplicity, we assign the same number of papers to each reviewer in the experiments. Similarly, we give the same number of reviewers (scores) to each paper. Given fixed numbers of papers per reviewer and reviewers per paper}, the reviewer assignment $\theset = \{ (\rid,\pid) \text{ where } \rid \text{ reviews }\pid, \rid\in[\nreviewer], \pid\in[\npaper] \}$ is generated uniformly at random from all possible assignments. The default setting is set to $\npaper=60$, $\sigma=0.5$, each paper gets $4$ reviewers and each reviewer is assigned $4$ papers. We examine the performance of our proposed algorithm under various settings where we change the parameters individually. 

\paragraph{Varying levels of noise.} 
We obtain solutions with data generated from the Thurstone model with different values of the standard deviations and show the results in normalized Kendall-tau error in Figure~\ref{fig: sim, sigma}. Note that under the setting with very small noise, the performances of the two baselines are slightly worse than those under the setting with a larger noise. This is caused by a larger number of ties between the quantized scores under a smaller noise level. Specifically,
when $\sigma = 1.0$, the average percentage of ties in the output of the quantized scores and the \breratings{} are $11.1\%$ and $5.8\%$ respectively. However, when $\sigma$ decreases to $0.1$, the average percentage of ties in their output increases to $12.4\%$ and $6.3\%$. By considering both consensus between reviewers and the provided rankings in dequantizing the scores, our proposed algorithm has less than $1.5\%$ ties in its output and outperforms the baseline methods under all three settings. We further show the distribution of selected $\hyperparam$ over the 20 trials in Figure~\ref{fig: dist of lambda in simulation} in Appendix~\ref{sec: additional figures}. As the noise level increases, the distribution of $\hyperparam$ selected by quantization-validation moves toward the larger end, which is an indication of decreasing weight on the consensus term in the objective in Algorithm~\ref{alg: main}.

\paragraph{Varying loads.} 
We also consider settings with varying numbers of assigned papers to each reviewer and varying numbers of reviews for each paper. While varying one of the two variables, we fix the other to be the same as the default setting. 
The error rates with a varying number of papers assigned to each reviewer are shown in Figure~\ref{fig: sim, nassign}. As the number of assigned papers increases, the error of \breratings{} baseline decreases, similar to the proposed algorithm. This is because the simple comparison injection employed by the \breratings{} baseline lets the number of papers per reviewer directly dictate the number of possible values for dequantized scores. Results show that changes in the number of reviewers per paper do not affect the performance significantly so we defer the corresponding plot of error rates to Figure~\ref{fig: sim, nscores} in Appendix~\ref{sec: additional figures}. 
In all settings with varying loads, our proposed algorithm consistently incurs smaller errors than the baselines.

\paragraph{$\ell_2$ error.}
In addition to the ranking error, we also report $\ell_2$ errors on the synthetic dataset. The $\ell_2$ errors are calculated with the same set of dequantized scores for which we report the ranking errors. Despite its primary focus on recovering the ranking among ${\score_{\rid\pid}}_{(\rid, \pid)\in\theset}$, the proposed algorithm shows a small advantage in $\ell_2$ error as well, compared to the baselines. The $\ell_2$ errors with varying numbers of papers per reviewer are displayed in Figure~\ref{fig: l2 errors pt 1, sim}. The results on the $\ell_2$ error for varying the noise and loads are qualitatively similar in that our method is no worse and offers a small improvement. The plots are deferred to Figure~\ref{fig: l2 errors, pt 2} in Appendix~\ref{sec: l2, iclr}.
\begin{figure}
  \centering
\begin{subfigure}{0.4\textwidth}
\includegraphics[width=\linewidth]{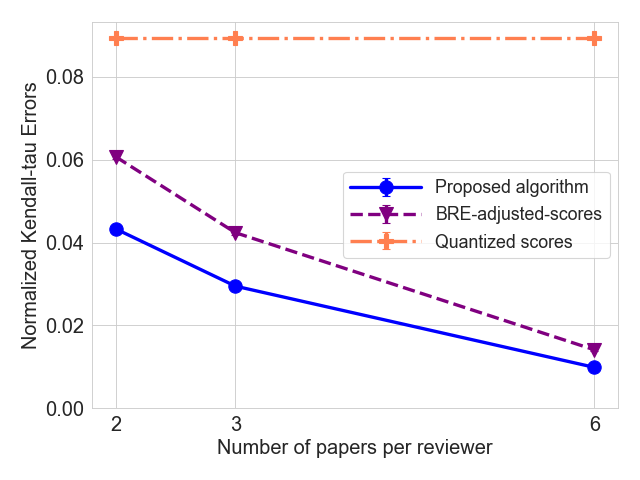}
\caption{Varying number of papers per reviewers. }
\label{fig: iclr, nassign}
\end{subfigure}\hfil % <-- added 
\begin{subfigure}{0.4\textwidth}
\includegraphics[width=\linewidth]{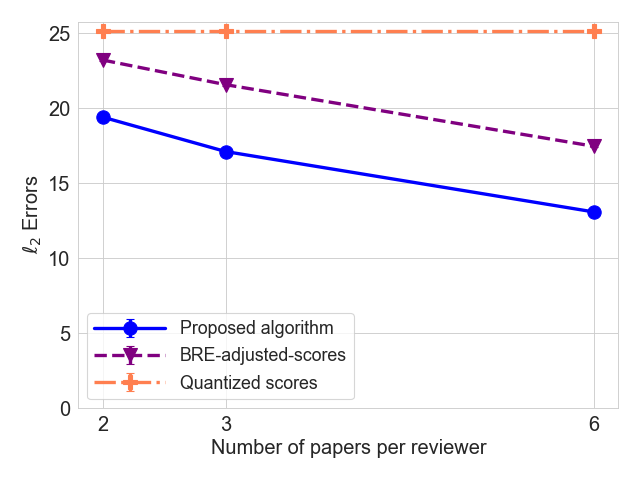}
\caption{$\ell_2$-error. Varying number of papers per reviewers.}
\label{fig: l2 errors pt 1, iclr}
\end{subfigure}\hfil % <-- added  
\caption{Experiment results on ICLR 2017 data. }
\end{figure}

\subsection{Real-world dataset from ICLR 2017}\label{sec: iclr dataset}
We conduct experiments on data from the peer-review process of the ICLR 2017 conference~\cite{kang18naacl}. There are $\npaper=427$ papers and every paper receives at least 3 reviews. For simplicity (so that number of papers is a multiple of the reviewer load), we keep $\npaper=426$ papers and retain $3$ reviews for each paper by discarding some reviews ($43$ out of $1321$ reviews are discarded). Since the reviewers are all anonymous in this dataset, we generate the reviewer assignments $\theset$ in a random manner, subject to the constraint that a fixed number of papers are assigned to every reviewer. 

Each review score comprises an integer from $1$ to $10$, which we treat as the ground-truth values for unquantized scores  $\{\score_{\rid\pid}\}_{(\rid,\pid)\in \theset}$. We generate the quantized scores by putting the original review scores in fewer quantization levels, via the quantization procedure: $\rating_{\rid\pid} = \ceil{\score_{\rid\pid}/2}$, such that $1\leq \rating_{\rid\pid}\leq 5, \rating_{\rid\pid} \in \mathbb{Z},~\forall (\rid,\pid)\in\theset$. 
The reviewer-provided partial rankings are generated from the original review scores as $\rank_\rid = \{ (\pid, \pid'): \score_{\rid\pid}>\score_{\rid\pid'}, (\rid,\pid)\in \theset, (\rid,\pid')\in \theset \}$, since the ICLR 2017 conference did not collect rankings directly from reviewers.

Figure~\ref{fig: iclr, nassign} plots the normalized Kendall-tau error of our algorithm and the baselines under varying numbers of papers per reviewer 
(note that we cannot control the noise level $\sigma$ in the ICLR 2017 dataset as we do in the synthetic dataset.) 
We observe that our algorithm incurs a lower error as compared to the baselines. Furthermore, the error decreases as the number of papers per reviewer increases. As this number decreases from $6$ to $2$, the percentage of ties in the output of quantized scores remains at $32.2\%$, while the percentage in the output of \breratings{} increases from $8.0\%$ to $23.2\%$. Our algorithm outputs less than $3.4\%$ percent of ties in all three settings.

Similar to the synthetic dataset, we report the $\ell_2$ errors on the ICLR 2017 dataset in Figure~\ref{fig: l2 errors pt 1, iclr}. Again, we observe that our estimator slightly outperforms the baselines in terms of the $\ell_2$ error. More details can be {found in Appendix~\ref{sec: l2, iclr}.} 
\begin{figure*}[!ht]
  \centering % <-- added
\begin{subfigure}{0.33\textwidth}
\includegraphics[width=\linewidth]{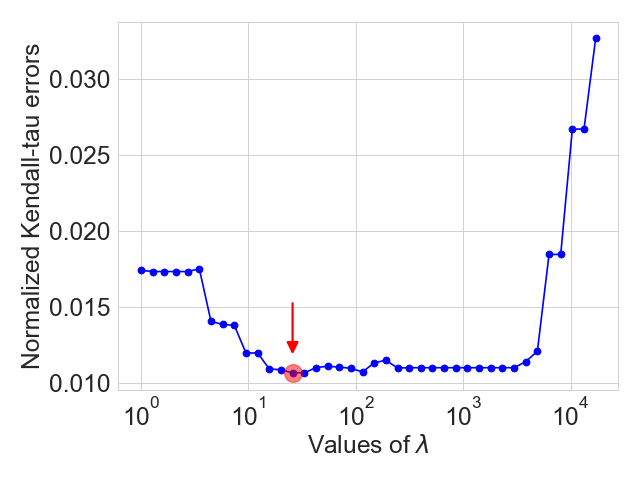}
\caption{Synthetic dataset 1: QV}
\medskip
\includegraphics[width=\linewidth]{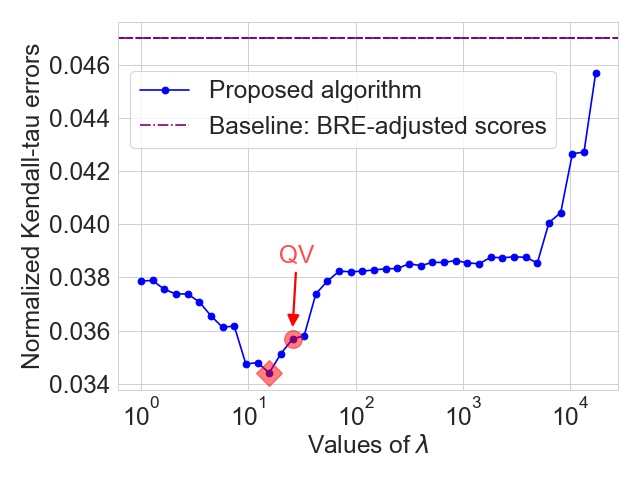}
\caption{Synthetic dataset 1: original data}
\end{subfigure}\hfill
\begin{subfigure}{0.33\textwidth}
\includegraphics[width=\linewidth]{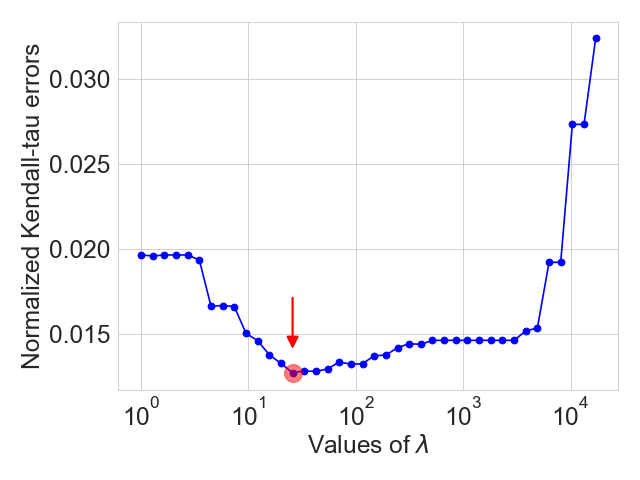}
\caption{Synthetic dataset 2: QV}
\medskip
\includegraphics[width=\linewidth]{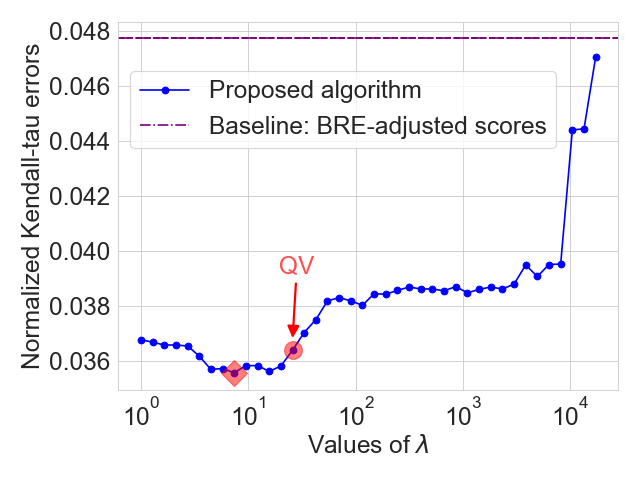}
\caption{Synthetic dataset 2: original data}
\end{subfigure}\hfill 
\begin{subfigure}{0.33\textwidth}
\includegraphics[width=\linewidth]{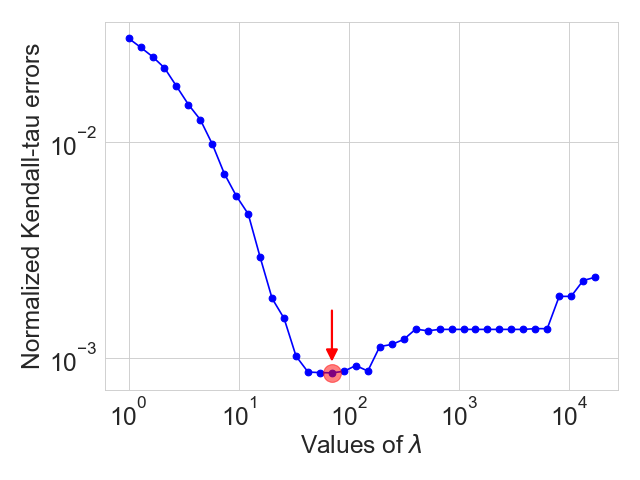}
\caption{ICLR 2017 dataset: QV}
\medskip
\includegraphics[width=\linewidth]{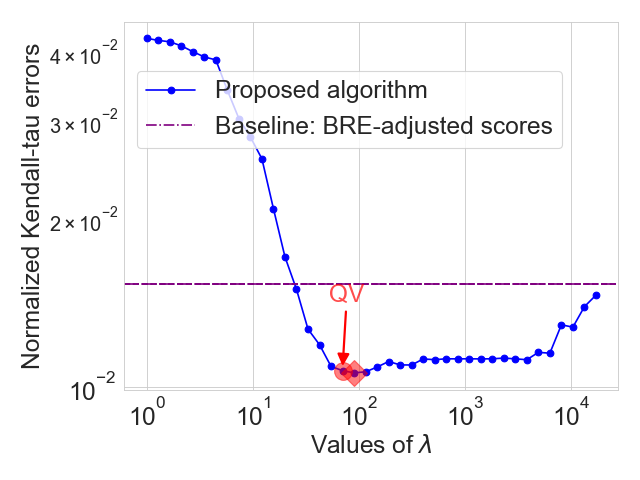}
\caption{ICLR 2017 dataset: original data}
\end{subfigure}\hfill
\caption{The ranking error with varying $\hyperparam$ in synthetic and ICLR datasets. {\bf Top row:} Error on the quantization validation set. Red circles mark the value selected by QV which achieves the smallest errors on the quantization-validation set. {\bf Bottom row:} Error curves on the original dataset.  The circles indicate performance under the hyperparameter value selected by QV (from the top row). This is compared to the performance of optimal value marked by red diamonds, which achieves the smallest error on the original dataset.
}
\label{fig: error with lambda}
\end{figure*}

\subsection{{Hyperparameter $\boldsymbol\hyperparam$ selection via QV}}
We present results that shed light on the Quantization Validation (QV) hyperparameter selection process introduced in Section~\ref{sec: quantization validation}, as well as the effect of hyperparameter $\hyperparam$ on the performance of the proposed algorithm. Figure~\ref{fig: error with lambda} presents plots comparing the performance of different values of hyperparameter $\hyperparam$, the value chosen by QV, and the best value of $\hyperparam$ chosen by a hypothetical oracle that has access to ground truth data. The plots compare these choices on two synthetic datasets (with parameters set as defaults specified in Section~\ref{sec: sim dataset}) and the ICLR 2017 dataset (with $6$ reviewers per paper). 
The errors are shown in log-scale for clarity for ICLR 2017 data, note that y-axes may not start at 0 to be able to zoom in on the relevant parts. Our experiments reveal a strong performance of the quantization-validation process: We observe that it can select a good $\widehat{\hyperparam}$ close to the optimal ideal value $\hyperparam^*$, and incurs an error close to that incurred by~$\hyperparam^*$.

\section{Discussion}\label{sec: discussion}
We address the problem of aggregating quantized scores and rankings evaluations in the form of dequantized scores,  applicable to important settings such as peer review. 
An aspect to keep in mind regarding any such adjustment that uses global data is that of privacy in peer review~\citep{ding2020privacy,jecmen2020manipulation,ding2022calibration}. By providing Area Chairs the information aggregated across all the reviewers who reviewed their assigned papers, we need to ensure that it should not  inadvertently reveal the review information of paper(s) outside their scope. Another direction of future work is that of global versus subgroup accuracy. {For example, some subgroups of papers might have a larger inter-reviewer disagreement, or fewer reviewer-provided comparisons than other papers, because of their fields. This phenomenon might affect the ranking errors in these subgroups through the consensus objective in our algorithm. In this work, we consider the Kendall-tau ranking error which is a global metric across all papers. It is also of interest to analyze and/or modify our proposed algorithm to ensure comparable error rates across various subgroups of papers.
It is also possible that reviewers might take malicious/adversarial behaviors against our proposed algorithm to affect the final outcome. For example, to improve the acceptance chance of his/her paper, a reviewer who is also a paper author might intentionally give low ratings to assigned papers. Therefore, making our algorithm strategyproof is also an important future direction.}
Finally, it is of interest to prove strong theoretical guarantees about the proposed algorithm including our quantization-validation method, or design new algorithms for this problem that have strong theoretical guarantees with good empirical performance.
\section{Acknowledgements}
This work was supported in parts by NSF CAREER award 1942124, NSF CIF 1763734, and a Google Research Scholar Award.

%%%%%%%%%%%%%%%%%%%%%%%%%%%%%%%%%%%%%%%%%%%%%%%%%%%%%%%%%%%%%%%%%%%%%%%%
{\small 
\bibliographystyle{unsrt}
\bibliography{ref}
}
~\\~\\

\appendix
\noindent\textbf{\Large Appendices}

\section{Additional experimental results}\label{sec: additional figures}
In this appendix, we present additional experimental results supplementing those in Section~\ref{sec: experiments}.
\begin{figure}[b]
\centering
\includegraphics[width=0.4\linewidth]{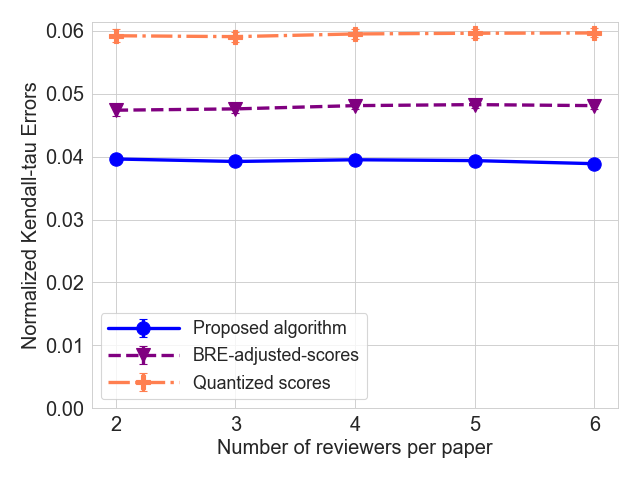}
\caption{Synthetic data with varying numbers of reviewers per paper.}
 \label{fig: sim, nscores}
\end{figure}
\subsection{Performance with varying numbers of reviewers per paper}
In Figure~\ref{fig: sim, nscores}, we show ranking error rates on the synthetic dataset with varying numbers of reviewers per paper. Change of this number does not affect the algorithm's performance significantly. 

\begin{figure}
    \centering % <-- added
\begin{subfigure}{0.4\textwidth}
  \includegraphics[width=\linewidth]{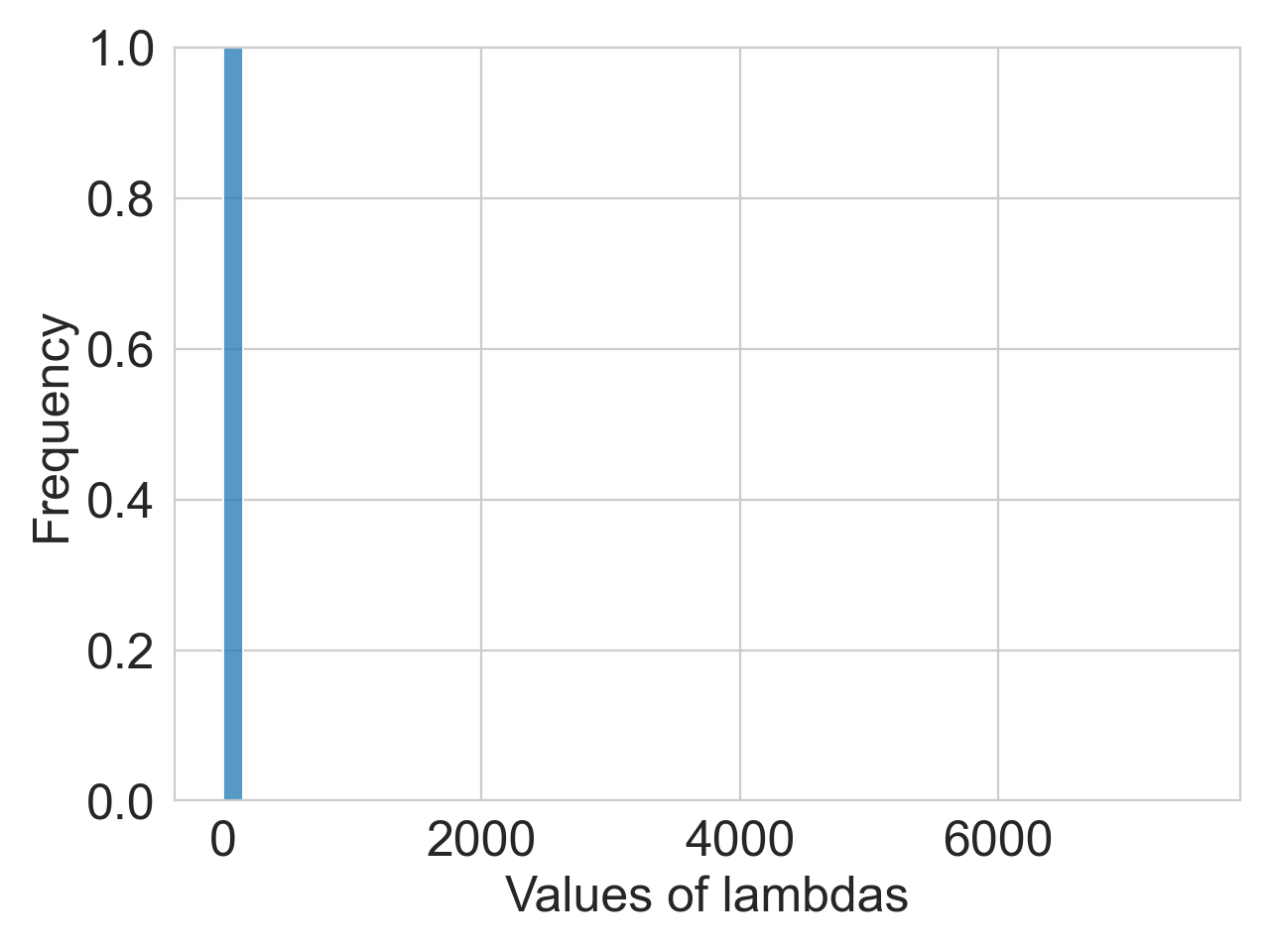}
  \caption{Noise level $\sigma=0.1$}
\end{subfigure}\hfil % <-- added
\begin{subfigure}{0.4\textwidth}
  \includegraphics[width=\linewidth]{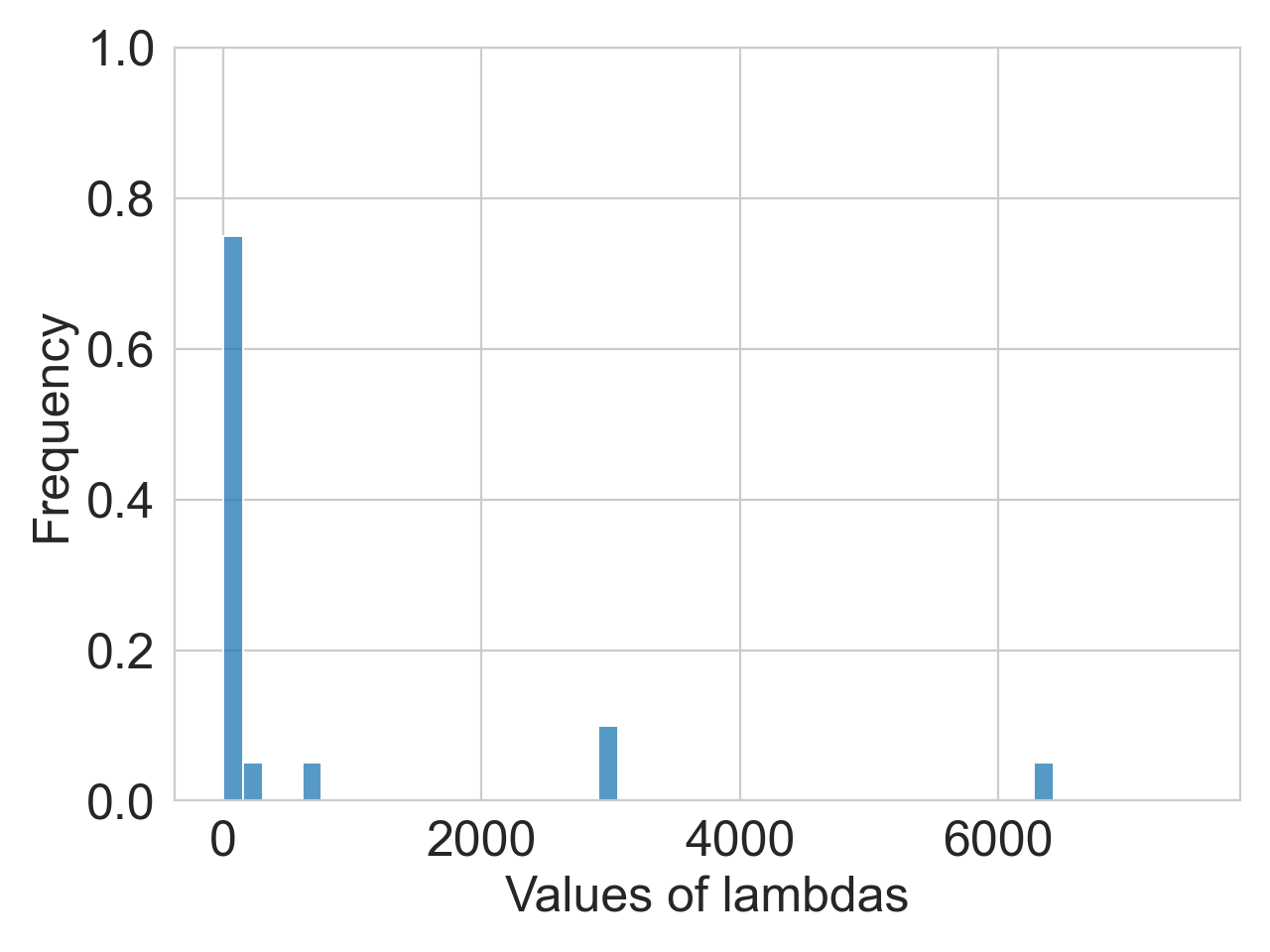}
  \caption{Noise level $\sigma=1.0$}
\end{subfigure}\hfil % <-- added

\medskip
\begin{subfigure}{0.4\textwidth}
  \includegraphics[width=\linewidth]{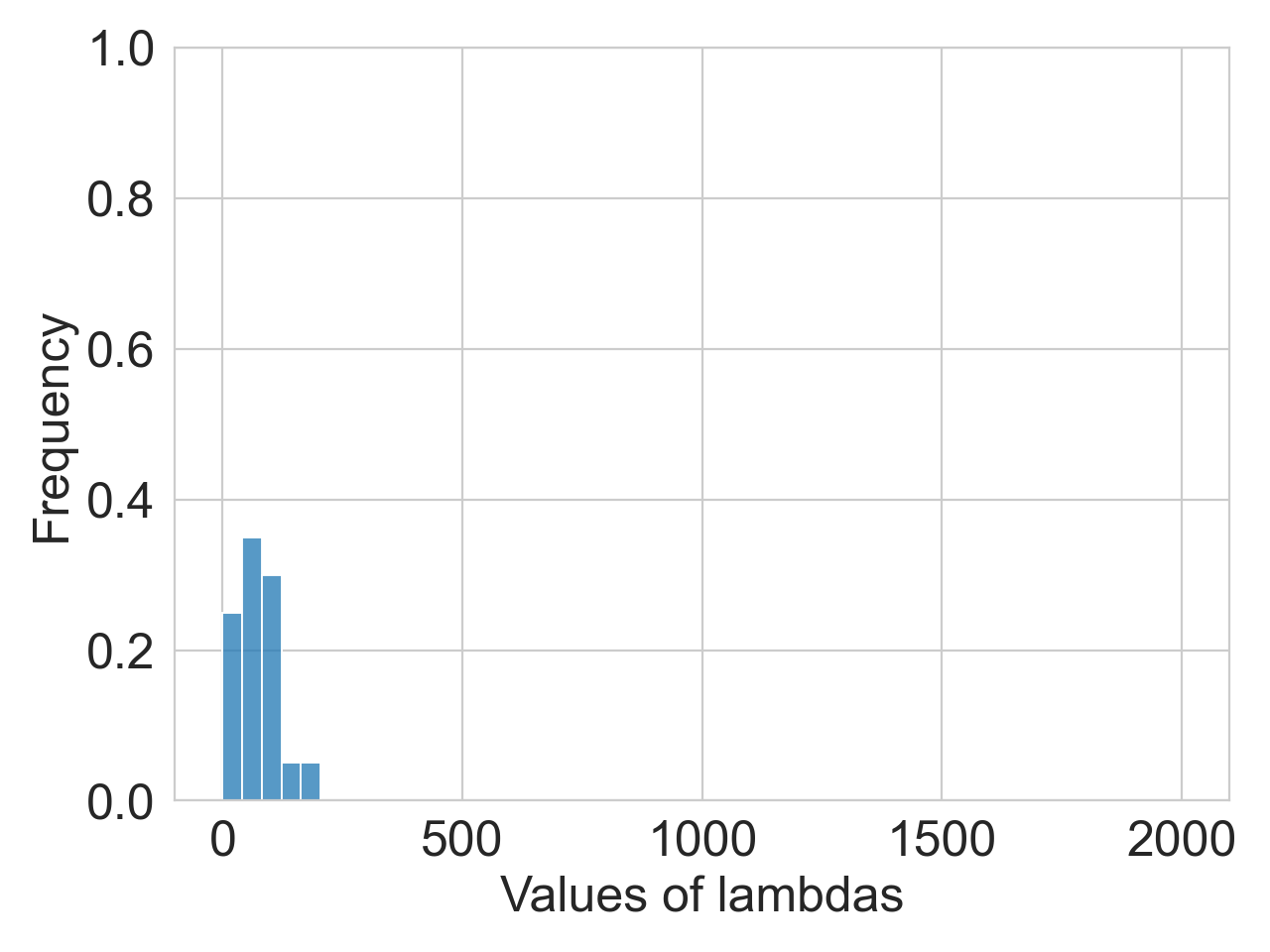}
  \caption{Number of assigned papers per reviewer $=2$}
  \label{fig: sim lambda, nassigns 6}
\end{subfigure}\hfil % <-- added
\begin{subfigure}{0.4\textwidth}
  \includegraphics[width=\linewidth]{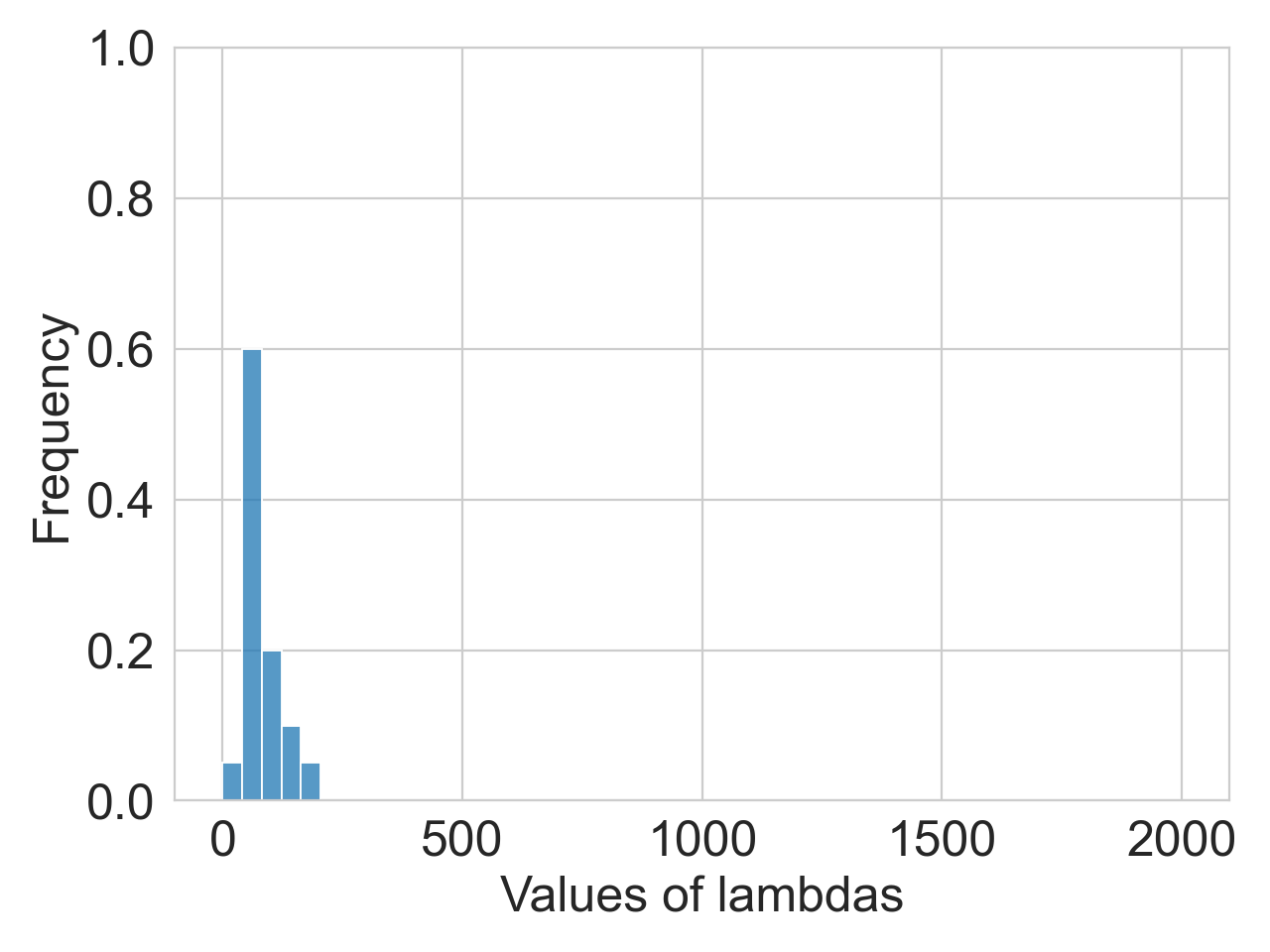}
  \caption{Number of assigned papers per reviewer $=6$}
\end{subfigure}\hfil % <-- added

\medskip
\begin{subfigure}{0.4\textwidth}
  \includegraphics[width=\linewidth]{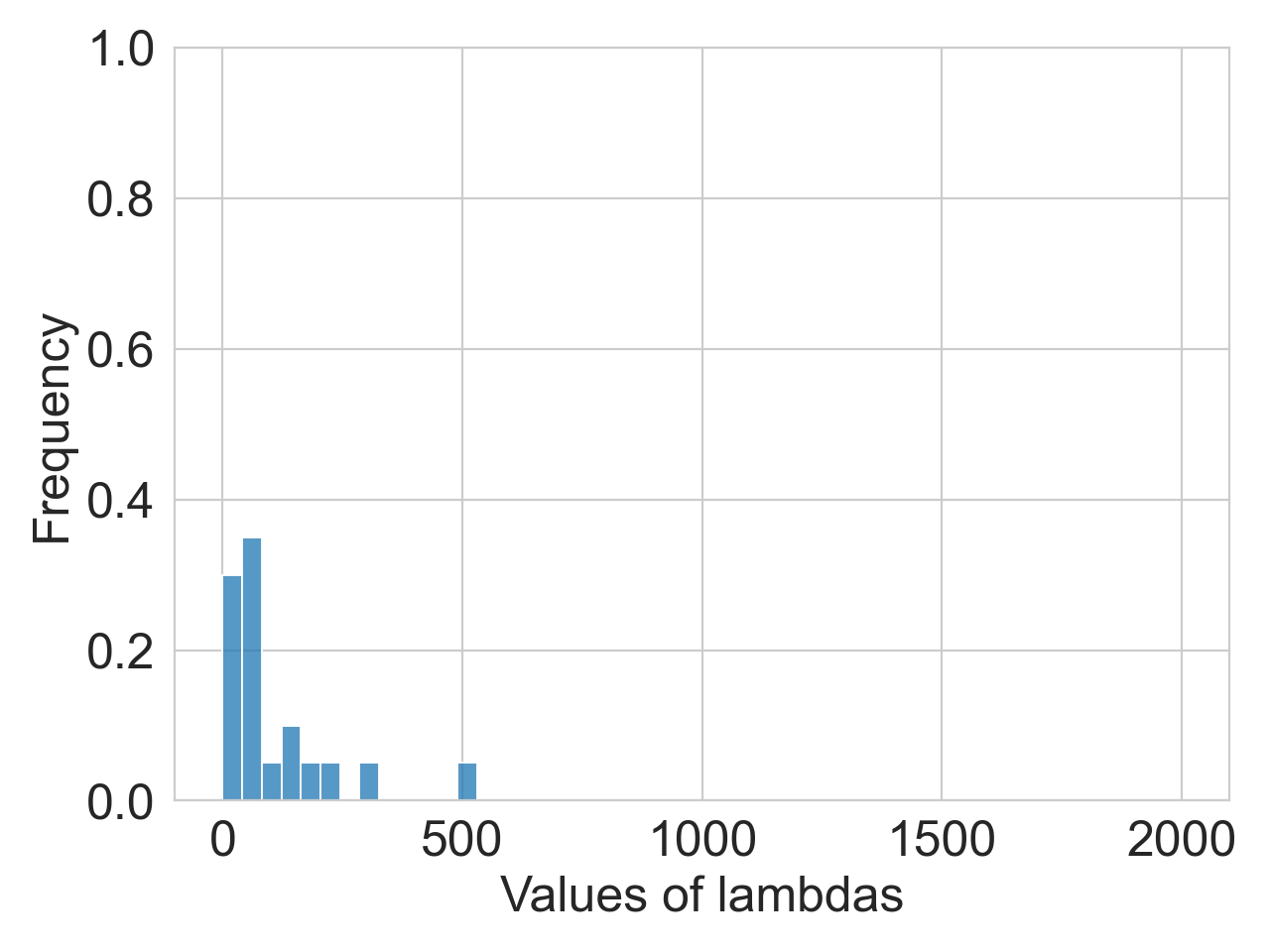}
  \caption{Number of scores per paper $=2$}
\end{subfigure}\hfil % <-- added
\begin{subfigure}{0.4\textwidth}
  \includegraphics[width=\linewidth]{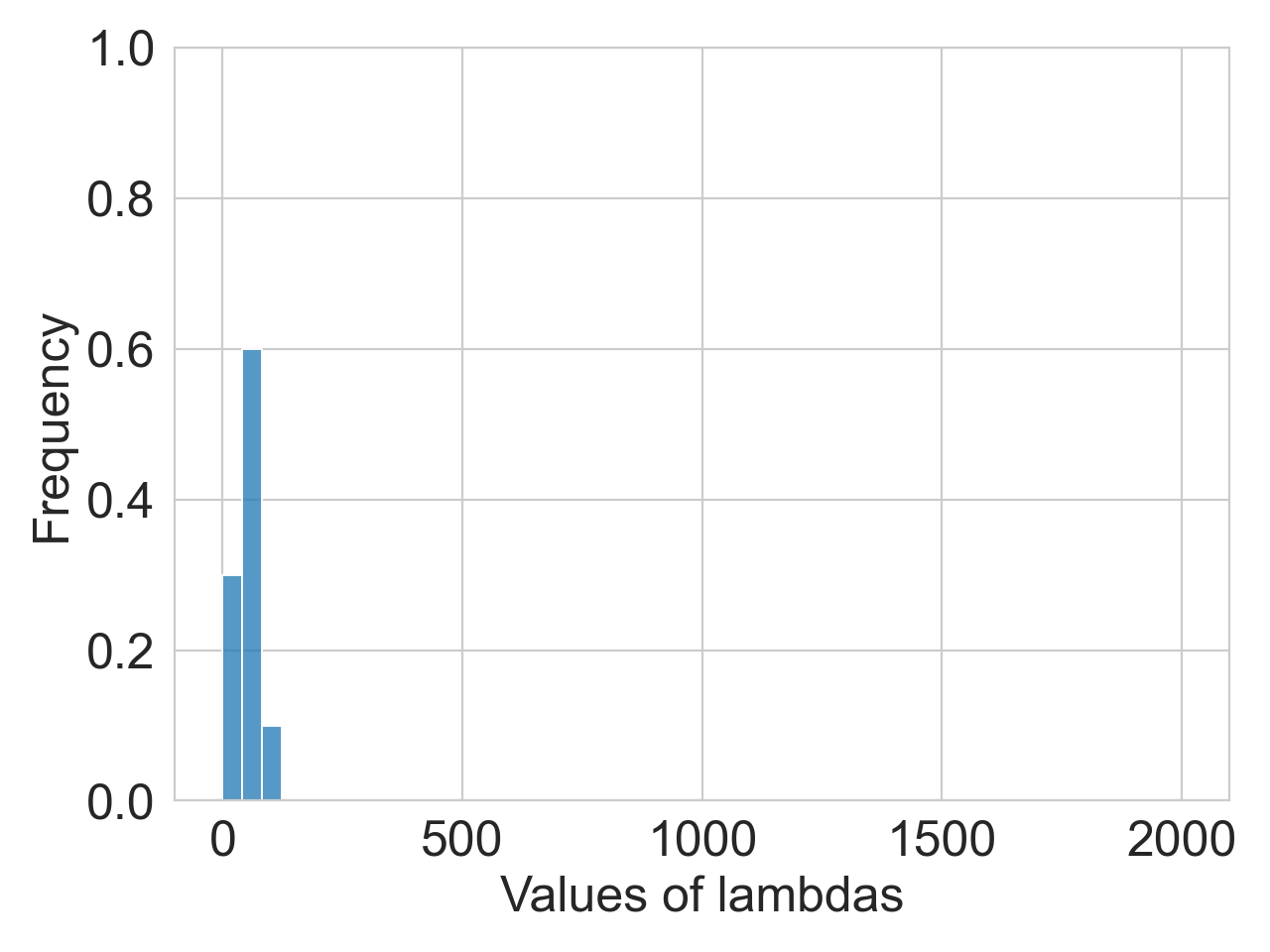}
  \caption{Number of scores per paper $=6$}
\end{subfigure}\hfil % <-- added
\caption{Distributions of $\hyperparam$ selected by QV in synthetic dataset with varying parameters. }
\label{fig: dist of lambda in simulation}
\end{figure}

\subsection{Distribution of hyperparameters selected by QV}
In Figure~\ref{fig: dist of lambda in simulation}, we show the distribution of hyperparameter value chosen by the quantization-validation procedure (Algorithm~\ref{alg: validation}) on synthetic datasets. The distribution of values changes most significantly as the noise standard deviation $\sigma$ increases. This is because increasing $\sigma$ makes scores more divergent for each paper. The QV effectively captures this and as a result, chooses larger $\lambda$ as the noise level $\sigma$ increases, which leads to decreasing weight on the consensus term in the objective of the optimization problem in Algorithm~\ref{alg: main}. The change in the number of papers per reviewer, or the number of reviewers per paper does not significantly affect the range of hyperparameter values selected by the QV. This indicates that these two parameters have no significant effect on how much the algorithm relies on consensus, as opposed to the noise level $\sigma$.

\begin{figure}
  \centering % <-- added
\begin{subfigure}{0.4\textwidth}
\includegraphics[width=\linewidth]{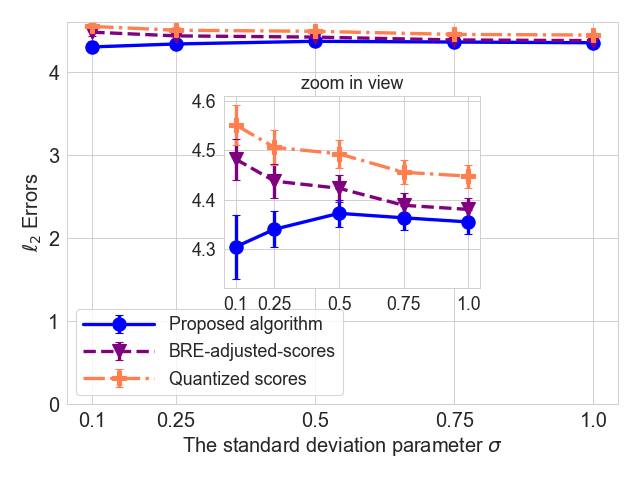}
\caption{Varying noise level $\sigma$. }
\end{subfigure}\hfil % <-- added
\begin{subfigure}{0.4\textwidth}
\includegraphics[width=\linewidth]{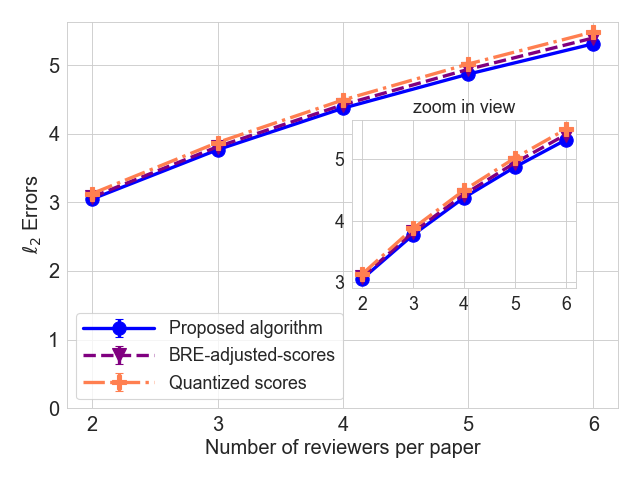}
\caption{Varying number of reviewers per paper.}
\end{subfigure}\hfil % <-- added
\caption{Additional experiment results on $\ell_2$ errors on  synthetic data.}
\label{fig: l2 errors, pt 2}
\end{figure}
\subsection{Performance in $\boldsymbol\ell_2$ errors}\label{sec: l2, iclr}
We display the additional results in $\ell_2$ error on synthetic data in Figure~\ref{fig: l2 errors, pt 2}. 
For the ICLR 2017 dataset, where the $\ell_2$ errors are shown in Figure~\ref{fig: l2 errors pt 1, iclr}, we provide additional details as follows. Given the data-generation process, we first project the dequantized scores back to integers in $[1, 10]$ by the function $\estscore_{\rid, \pid} = \lfloor 2\estscore_{\rid, \pid} - 0.5 \rceil, (\rid, \pid)\in\theset$. For example, $\estscore_{\rid\pid}$ in the interval $[1, 1.5)$ is projected to $2$, and $\estscore_{\rid\pid}$ in the interval $[1.5, 2)$ is projected to~$3$. 
For synthetic dataset, we provide the additional results of performance in $\ell_2$ error with varying levels of noise $\sigma$ and numbers of reviews per paper. The results under these settings are shown in Figure~\ref{fig: l2 errors, pt 2}.

\subsection{Performance with varying $\boldsymbol\smallnum$}\label{sec: varying nu}
We show that the proposed algorithm is robust to several choices of small $\smallnum$. We vary the value of $\smallnum$ for our proposed algorithm and obtained results on both synthetic and ICLR 2017 data, which are shown in Figure~\ref{fig: nu}. For synthetic data, the parameters are set to default (Section~\ref{sec: sim dataset}). For the ICLR 2017 data, the only parameter which is the number of papers per reviewer is set to $6$. Performances of the baselines are also plotted as a reference. 
\begin{figure}
    \centering
    \begin{subfigure}{0.4\textwidth}
    \includegraphics[width=\linewidth]{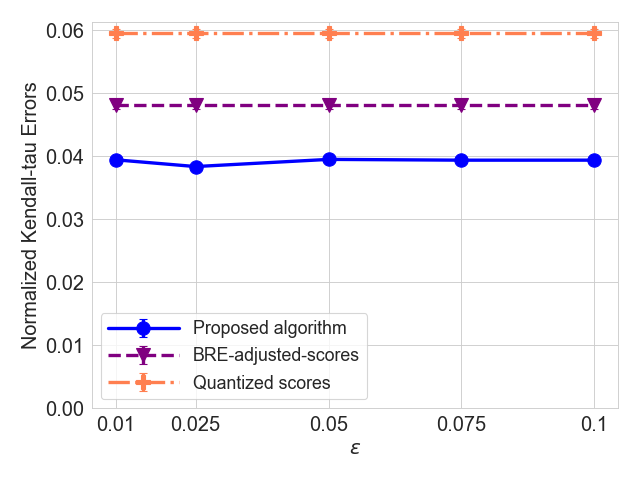}
    \caption{Experimental results on synthetic data.}\label{fig: sim, nu}
    \end{subfigure}\hfil % <-- added
    \begin{subfigure}{0.4\textwidth}
    \includegraphics[width=\linewidth]{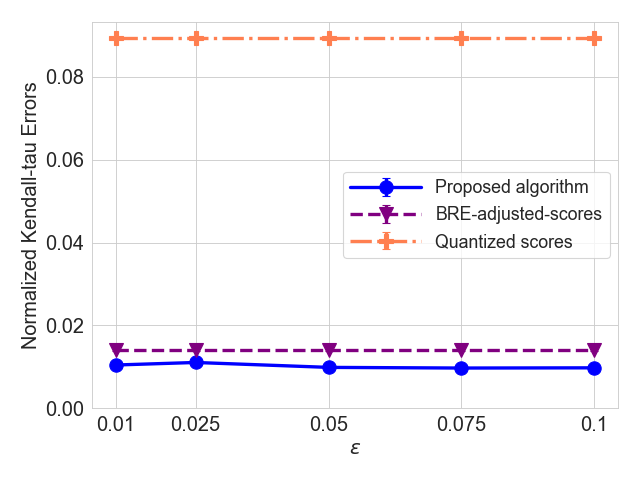}
    \caption{Experimental results on ICLR 2017 data.}\label{fig: ICLR, nu}
    \end{subfigure}\hfil % <-- added
\caption{Experimental results with varying $\smallnum$.}\label{fig: nu}
\end{figure}

\section{Proofs of Propositions}\label{sec: proofs}
In this section, we present the proofs for propositions from Section~\ref{sec: analysis}.

\subsection{Proof of Proposition~\ref{prop: infinite lambda}}\label{sec: reduction to BRE}
For clarity, we first present the full procedure of \breratings{} here. It is also one of the baseline methods that we evaluate in Section~\ref{sec: experiments}, for synthetic data and with a simple adjustment for the ICLR 2017 data. For conciseness, we fix the number of assigned papers across reviewers and denote this constant as $\nassign$.
\begin{algorithm}[!ht]
  \caption{\breratings{}}
  \begin{algorithmic}[1]\label{alg: bre ratings}
    \REQUIRE The ranking and quantized scores provided by each reviewer: $\{\rating_{\rid\pid}\}_{(\rid,\pid)\in \theset}$ and $\{\rank_\rid\}_{\rid\in [\nreviewer]}$. For simplicity, assume $\rank_\rid$ is total ranking among papers $\rank_\rid = \{\pid_\nassign \succ \pid_{\nassign-1}\succ\dots \succ\pid_1\}$. A small constant $\smallnum$. 
    \FOR{$\rid\in$ \{reviewers\}}
      \STATE Divide all the reviewed papers by their quantized scores, denote the set of quantization bins as $\mathbb{B}$. 
      \FOR{$\bin\in\mathbb{B}$}
      \STATE Define the ranked papers in $\bin$ as $\{\pid_u \succ \pid_{u-1}\succ\dots \succ\pid_v\}$, where $u\geq v$ and $u-v+1$ represents number of papers in this quantization bin. Denote the score value which is the same for all papers in $\bin$ as $\rating_\bin$. 
      \FOR{$t = v\dots u$}
      \STATE Set $\estscore_{\rid,\pid_{t}} = \rating_\bin + (t-v)\times \smallnum$. 
      \ENDFOR
      \STATE Adjust the values $\estscore_{\rid,\pid} = \estscore_{\rid,\pid}-\left( \frac{1}{\vert\{\pid: \pid\in\bin\} \vert}\sum_{\pid: \pid\in\bin}\estscore_{\rid,\pid} - z_\bin\right)$ for $\pid \in \bin$.
      \ENDFOR
    \ENDFOR
    \STATE Output $\{\estscore_{\rid\pid}\}_{ (\rid,\pid) \in \theset}$
  \end{algorithmic}
\end{algorithm}

We first introduce the notion of quantization bins, which is defined separately across different reviewers. For a reviewer $\rid$, a quantization bin is a group of papers with the same quantized scores given by $\rid$. Each reviewer can have at least $1$ quantization bin and at most $\nassign$ quantization bins in the assigned papers.
In Algorithm~\ref{alg: bre ratings}, the quantized score $\rating_{\rid\pid}$ is incremented by a value according to the rank of $\pid$ inside its quantization bin, then ``centered'' to form the output whose mean value inside each quantization bin remains the same. 
Precisely, $\estscore_{\rid\pid}$ can be explicitly written as follows. Note that in Algorithm~\ref{alg: bre ratings}, the choice of $\smallnum$ does \emph{not} change the ranking of output $\{\estscore_{\rid\pid}\}_{(\rid, \pid)\in\theset}$ and for implementation, we choose $\smallnum$ to be the same as in the proposed algorithm, which is $0.05$. 
\begin{equation}\label{eq: score in bre ratings}
\estscore_{\rid\pid} = \rating_{\rid\pid} + \smallnum \left( \rank_{\rid}(\pid) - \frac{1}{2} - \frac{\npaper_{\rid\pid}}{2}\right),
\end{equation}
where $\npaper_{\rid\pid}$ denotes the total number of papers in the bin that $\pid$ belongs to, within the scope of reviewer $\rid$, and $\rank_{\rid}(\pid)$ denotes the ranking of $\pid$ inside its quantization bin and within the scope of papers reviewed by $\rid$. For example, if reviewer $\rid$ reviews only two papers $\pid$ and $\pid'$, and gives identical scores to the two papers, and ranks them as $\pid\succ \pid'$, then we have $\rank_{\rid}(\pid) =2, \rank_{\rid}(\pid') =1$. 

Having established the procedure in Algorithm~\ref{alg: bre ratings}, we now study (i) the reduction from our proposed algorithm as $\hyperparam=\infty$ to \breratings{}, and (ii) the connection between~\breratings~and BRE. 

\subsubsection{Reduction from proposed algorithm to \breratings{}}\label{subsec: reduce to bre ratings}
Recall that, as $\hyperparam=\infty$, the proposed algorithm reduces to the following optimization problem which does not have the objective term that captures reviewer consensus.  $\estscore_\rid$ is the vector of $\{\estscore_{\rid, \pid}\}_{(\rid,\pid)\in \theset}$ and $\score_\rid, \rating_\rid$ are that of $\{\score_{\rid, \pid}\}_{(\rid,\pid)\in \theset}, \{\rating_{\rid, \pid}\}_{(\rid,\pid)\in \theset}$.
\begin{align}\label{eq: centered comp-injection}
  &\estscore_\rid = \argmin_{\score_\rid\in \RR^{\nassign}} \norm{\score_\rid -\rating_\rid},\\
  &\text{such that~} \score_{\rid \pid} \geq \score_{\rid \pid'} + \smallnum \text{\quad whenever } (\rid,\pid)\in \theset, (\rid,\pid')\in \theset, \text{ and }\pid \succ_{\rid} \pid' \nonumber \\
  &\text{~~~~~~~~~~~~~~~} \rating_{\rid\pid} - 0.5\leq \score_{\rid\pid}\leq \rating_{\rid\pid}+ 0.5 \quad \forall~(\rid,\pid)\in \theset. 
\end{align}

Recall that we assume $\smallnum$ to be a small constant such that the feasible set is not empty. For example, in practice we choose $\smallnum$ to be $0.05$ (Section~\ref{sec: experiments}). 
Given the problem in~\eqref{eq: centered comp-injection}, the solutions $\estscore_{\rid\pid}$ are different from $\rating_{\rid\pid}$ only when there exist tied scores given by a reviewer. Precisely, with the assumption that reviewers report total rankings, consider the problem of finding the vector $\mathbf{x} = [x_1\dots x_m]\in \RR^m$ such that $x_1 = x_2 + \smallnum = \dots = x_m+ (m-1)\smallnum$ that minimizes the objective function $\norm{\mathbf{x} - \mathbf{c}}^2$, $\mathbf{c} = [c, c,\dots, c]\in \RR^m$ where $c$ is some constant. 
Let us set the derivative of the objective function with respect to $x_m$ to find the minimizer:

\begin{align*}
&\widehat{x}_m = \argmin_{x} \sum_{i = 0}^{m-1}(x + i\smallnum  - c)^2 
~~\rightarrow~~ \widehat{x}_m = c - \frac{m-1}{2}\smallnum.
\end{align*}
Therefore $\widehat{x}_i = \widehat{x}_m + (m-i)\smallnum =c + \left((m-i) - \frac{m-1}{2}\right)\smallnum$. In the setting of~\eqref{eq: centered comp-injection}, $\mathbf{x}$ corresponds to the vector of scores of papers with tied scores for a fixed reviewer. $c$ is the score value, $m$ is the total number of papers in the quantization bin. Therefore, we have the solution to the problem in~\eqref{eq: centered comp-injection} as:
\begin{align}
  \estscore_{\rid\pid} &= \rating_{\rid\pid} + \smallnum\left((m-i) - \frac{m-1}{2}\right) = \rating_{\rid\pid} + \smallnum\left(\rank_\rid(\pid) - 1 - \frac{\npaper_{\rid\pid}-1}{2}\right) \nonumber\\
  &=\rating_{\rid\pid} + \smallnum\left(\rank_\rid(\pid) - \frac{1}{2} - \frac{\npaper_{\rid\pid}}{2}\right). \label{eq: est when infty lambda}
\end{align}
It is easy to see the equivalence between~\eqref{eq: score in bre ratings} and~\eqref{eq: est when infty lambda}, indicating that under the assumption that each reviewer provides total ranking $\rank_\rid$ among all assigned papers, the proposed algorithm reduces to Algorithm~\ref{alg: bre ratings} as $\hyperparam=\infty$.

\subsubsection{Relationship between \breratings~and BRE}\label{subsec: bre ratings and bre}
Given possibly noisy, randomly collected pairwise comparisons where each pair is compared with a certain probability, the BRE estimates a score for each item (paper). The final goal in their work is global ranking, which is revealed by the estimated scores. In their setting, each pair can only be compared once, and the score of an item is calculated as the relative number of items preceding and succeeding it. 
Using our notations, let us denote the output score of paper $\pid$ as $\esttruescore_\pid$, since they do not distinguish between different reviewers. In BRE algorithm, the estimated scores for a paper is 
\begin{equation}\label{eq: score in BRE}
\esttruescore_\pid\propto \left\vert \pid'\in [\npaper] : \pid'\neq \pid, \pid' \prec \pid \right\vert - \left\vert \pid'\in [\npaper] : \pid'\neq \pid, \pid' \succ \pid  \right\vert.
\end{equation}
Recall the adjustment to scores performed by \breratings{} in~\eqref{eq: score in bre ratings}. 
For a pair of reviewer and paper $(\rid,\pid)$, let $\bin_{\rid}(\pid)$ denote the set of papers reviewed by $\rid$ and are in the same quantization bin as $\pid$. We can write that: 
\begin{align*}
  &\npaper_{\rid\pid} = \vert \pid' : \pid\in \bin_{\rid}(\pid), \pid' \prec \pid \vert + \vert \pid' : \pid'\in\bin_{\rid}(\pid), \pid' \succ \pid \vert + 1, \\
  & \rank_\rid(\pid) = \vert \pid' : \pid\in \bin_{\rid}(\pid), \pid' \prec \pid \vert +1.
\end{align*}
Plugging the above back in~(\ref{eq: score in bre ratings}) gives us:
\begin{equation}\label{eq: equivalence between borda same and BRE}
  \estscore_{\rid\pid} = \rating_{\rid\pid} + \frac{\smallnum}{2}\left(\vert \pid' : \pid\in \bin_{\rid}(\pid), \pid' \prec \pid \vert - \vert \pid' : \pid\in \bin_{\rid}(\pid), \pid' \succ \pid \vert \right)
\end{equation}
It is easy to see that the adjustment amount conditioned on the quantized score in~(\ref{eq: equivalence between borda same and BRE}) is proportional to the relative difference between the number of papers preceding and succeeding paper $\pid$ in its quantization bin. The multiplicative factor is simply the small constant $\epsilon/2$, since the output scores are arbitrary-scale (Section~\ref{sec: proposed algorithm}). 

Combining~\ref{subsec: reduce to bre ratings} and~\ref{subsec: bre ratings and bre}, we proved the following in the special case where (i) reviewers report total rankings of assigned paper and (ii) $\hyperparam=\infty$ in Algorithm~\ref{alg: main}:  Algorithm~\ref{alg: main} adds to each quantized score a value that is proportional to the estimated score from BRE, when BRE is given the ranking information within the quantization bin. 

\subsection{Proof of Proposition~\ref{prop: only ratings}}\label{sec: reduction to weighted avg}
Recall that $\nreview$ is the number of scores received by each paper. Without ranking information, $\{\estscore_{\rid\pid}\}$ can be solved separately for each paper $\pid \in [\npaper]$. For a fixed $\pid$, the proposed algorithm reduces to the following optimization problem. 

\begin{align}
  &\argmin_{\{\score_{\rid\pid}\}_{(\rid,\pid)\in E}}\quad \sum_{\text{reviewers } \rid: (\rid,\pid) \in \theset} ~~ (\score_{\rid \pid} - \bar{\score_\pid})^2 \quad +\quad \hyperparam \sum_{(\rid,\pid) \in \theset} (\score_{\rid \pid}-\rating_{\rid \pid})^2. \label{eq: ratings only} \\
  &\text{such that~}\rating_{\rid\pid} - 0.5\leq \score_{\rid\pid}\leq \rating_{\rid\pid}+ 0.5 \quad \forall~ {(\rid,\pid)\in \theset}.\nonumber
\end{align}
When $\hyperparam>0$, the objective is a strictly multivariate convex function. For every reviewer $\rid$ that reviews paper $\pid$, the partial derivative of objective in~(\ref{eq: ratings only}) is as follows. For simplicity, let us denote the objective in~\eqref{eq: ratings only} as $\objfunc_\pid$.
\begin{equation*}
  \frac{\partial \objfunc_\pid}{\partial \score_{\rid\pid}} = 2 \left(\score_{\rid\pid}-\frac{1}{\nreview}\sum_{\rid: (\rid, \pid)\in\theset}\score_{\rid \pid} + \hyperparam (\score_{\rid \pid} - \rating_{\rid\pid}) \right).
\end{equation*}
When $\score_{\rid\pid} = \frac{1+\nreview\hyperparam}{\nreview(1+\hyperparam)}\rating_{\rid\pid} + \sum_{\rid'\neq \rid}\frac{1}{\nreview(1+\hyperparam)}\rating_{\rid'\pid}$, the convex multivariate objective function $\objfunc_\pid$ achieves local minimum since its derivatives achieve $0$ for all variables. Since the function is strictly convex, the local minimum is the global minimum. If the global minimum is inside the feasible set defined by the linear inequalities, then the solution to the problem in~(\ref{eq: ratings only}) is 
\begin{equation}\label{eq: weighted average}
  \estscore_{\rid\pid} = \frac{1+\nreview\hyperparam}{\nreview(1+\hyperparam)}\rating_{\rid\pid} + \sum_{\rid'\neq \rid}\frac{1}{\nreview(1+\hyperparam)}\rating_{\rid'\pid}.
\end{equation} 
In other words, $\{\estscore_{\rid\pid}\}$ is the weighted average of scores for paper $\pid$, where the weights are dependent on constant $\nreview$ (number of reviews for each paper) and the hyperparameter $\hyperparam$.
If the global minimum is outside the feasible set, observe that the objective function in~\eqref{eq: ratings only} is convex to each variable $\score_{\rid\pid}$ if other variables are fixed. Therefore, in this case, the solution $\{\estscore_{\rid\pid}\}$ is the closest point in $[\rating_{\rid\pid}-0.5, \rating_{\rid\pid}+0.5]$ to the right-hand side in~\eqref{eq: weighted average}. To summarize, the solution to~\eqref{eq: ratings only} is as follows.
\begin{equation}\label{eq: ratings only result}
    \estscore_{\rid\pid} = \min\left(\max\left(\frac{1+\nreview\hyperparam}{\nreview(1+\hyperparam)}\rating_{\rid\pid} + \sum_{\rid'\neq \rid}\frac{1}{\nreview(1+\hyperparam)}\rating_{\rid'\pid}, ~\rating_{\rid\pid}-0.5\right), ~\rating_{\rid\pid}+0.5\right).
\end{equation}

\subsection{Proof of Proposition~\ref{prop: consensus in Thurstone}}\label{sec: consensus in Thurstone}
The likelihood for all $\rid, \pid: (\rid, \pid)\in\theset$ which we study can be expressed as 
$$\PP\left(\{\rating_{\rid\pid}\}, \{\score_{\rid\pid}\}\vert \{\truescore_\pid\}\right) \propto \PP\left(\{\rating_{\rid\pid}\}\given \{\score_{\rid\pid}\}, \{\truescore_\pid\}\right) \PP\left(\{\score_{\rid\pid}\}\given \{\truescore_\pid\}\right)
$$
We thus have the following equivalence for taking maximization over the latent $\score$s and $\truescore$s. Note that we study the dequantized scores, so focusing on the solutions for $\score_{\rid, \pid}$.
\begin{align}
    \argmax_{\{\score_{\rid\pid}\}}~\max_{\{\truescore_\pid\}} \PP(\{\rating_{\rid\pid}\}, \{\score_{\rid\pid}\}\given \truescore_\pid) = \argmax_{\{\score_{\rid\pid}\}}~\max_{\{\truescore_\pid\}} \PP(\{\rating_{\rid\pid}\}\given \{\score_{\rid\pid}\}, \truescore_\pid)~~\PP(\{\score_{\rid\pid}\}\given \truescore_\pid)\label{eq: thurstone likelihood}
\end{align}
Given observations of $\{\rating_{\rid\pid}\}_{(\rid,\pid)\in \theset}$, the likelihood $\PP\left(\{\rating_{\rid\pid}\}_{(\rid,\pid)\in \theset}\given \{\score_{\rid\pid}\}_{(\rid,\pid)\in \theset}, \{\truescore_\pid\}_{\pid\in[\npaper]}\right)$ equals $1$ only when such consistency is satisfied: $\score_{\rid\pid} \in [\rating_{\rid\pid}-0.5, \rating_{\rid\pid}+0.5], \forall (\rid,\pid)\in \theset$ and equals $0$ otherwise, due to the deterministic nature of the quantization process. If consistency is satisfied, we then consider optimization of the second likelihood on the right-hand side by taking the logarithm of it. 
\begin{align}
    \log \PP(\{\score_{\rid\pid}\}\given \truescore_\pid) &= \sum_{\rid: (\rid,\pid)\in \theset} -\frac{(\score_{\rid\pid}-\truescore_\pid)^2}{2\sigma^2}\log(\frac{1}{\sigma\sqrt{2\pi}}) \nonumber \\
    &\propto  - {\sum_{\rid: (\rid,\pid)\in \theset} (\score_{\rid\pid}-\truescore_\pid)^2}.\label{eq: log-likelihood}
\end{align}
For a fixed paper $\pid$, we can find the maximizer $\esttruescore_\pid$ of~(\ref{eq: log-likelihood}) by setting the derivative to $0$, which yields $\esttruescore_\pid = \frac{1}{\vert \{\rid': (\rid', \pid)\in \theset\}\vert}\sum_{\rid: (\rid,\pid)\in \theset} \score_{\rid\pid}$ for all $\pid\in[\npaper]$. 
Plugging $\esttruescore_\pid, \forall \pid\in[\npaper]$ back into~\eqref{eq: thurstone likelihood} and we have that:
\begin{align*}
    \argmax_{\{\score_{\rid\pid}\}}~\max_{\{\truescore_\pid\}} \log \PP(\{\rating_{\rid\pid}\}, \{\score_{\rid\pid}\}; \truescore_\pid) 
    = & \argmin_{\{\score_{\rid\pid} \}}\sum_{\rid} (\score_{\rid\pid}-\frac{1}{\vert \{\rid': (\rid', \pid)\in \theset\}\vert}\sum_{\rid: (\rid,\pid)\in \theset} \score_{\rid\pid})^2. \\
    &\text{ s.t. }\rating_{\rid\pid} - 0.5\leq \score_{\rid\pid}\leq \rating_{\rid\pid}+ 0.5
\end{align*}

Therefore, the maximizers $\{\score_{\rid\pid}\}_{\rid, \pid}$ of likelihood function under the Thurstone model with quantization is the same as the maximizer of the consensus objective in Algorithm~\ref{alg: main}, with the constraints that $\score_{\rid\pid}$'s are consistent with the quantized scores $\rating_{\rid\pid}$'s.
\section{Baseline method for ICLR 2017 data}\label{sec: additional algs}
In Algorithm~\ref{alg: bre ratings} in Section~\ref{sec: reduction to BRE}, we show the baseline method when reviewers provide total rankings of assigned papers. 
However, in the ICLR dataset (Section~\ref{sec: iclr dataset}), we derive the reviewer-reported rankings from their original review scores, since ranking information was not collected directly from reviewers. The original scores are integers in $1\sim 10$, and the rankings are derived as: $\rank_\rid = \{ (\pid, \pid'): \score_{\rid\pid}>\score_{\rid\pid'}, (\rid,\pid)\in \theset, (\rid,\pid')\in \theset \}$. 
Therefore, the reviewer-reported rankings are not total rankings over assigned papers, whenever there exist ties in the original review scores $\{\score_{\rid\pid}\}_{(\rid,\pid)\in \theset}$. 
Instead, they can be seen as a total ranking over groups of paper. Precisely, the rankings observed in this dataset are a subset of partial rankings that can be expressed as $\rank_\rid = \{ \group_\ngroup \succ \group_{\ngroup-1} \succ \dots \group_1\}$, where each $\group_i, i\in [\ngroup]$ represents a group of paper with tied scores, $\pid\succ\pid'$ if $\pid\in \group_u$, $\pid'\in\group_v$ and $u, v: u>v, u\in [\ngroup], v\in [\ngroup]$. 
For input to algorithms, if $\score_{\rid\pid} = \score_{\rid\pid'}$, then $\pid, \pid' \in \group_i$ for some $i\in[\ngroup]$. 

{For the ICLR 2017 dataset with partial rankings, we employ a baseline algorithm that can be seen as a generalization of Algorithm~\ref{alg: bre ratings}. The baseline method for the ICLR 2017 dataset is defined in Algorithm~\ref{alg: bre ratings partial}. The difference is that output scores are now adjusted from the quantized scores in groups instead of individually. 
Precisely, the score adjustment of an item is not dependent on the number of items preceding and succeeding it, but on the number of groups preceding and succeeding its group. When reviewers provide total rankings of assigned papers, each group only contains one paper and consequently, Algorithm~\ref{alg: bre ratings partial} reduces to Algorithm~\ref{alg: bre ratings}.}

In the following algorithm, the value of $\smallnum$ remains $0.05$. 
\begin{algorithm}[tbh]
  \caption{\adjustedratings}
  \begin{algorithmic}[1]\label{alg: bre ratings partial}
    \REQUIRE The ranking and scores provided by each reviewer: $\{\rating_{\rid\pid}\}_{(\rid,\pid)\in \theset}$ and $\{\rank_\rid\}_{\rid\in [\nreviewer]}$. $\rank_\rid$ is partial ranking among papers $\rank_\rid = \{ \group_\ngroup \succ \group_{\ngroup-1} \succ \dots \group_1\}$. A small constant $\smallnum$. 
    \FOR{$\rid \in$ \{reviewers\}}
    \STATE Divide all the reviewed papers by their quantized scores, denote the set of quantization bins as $\mathbb{B}$. 
      \FOR{$\bin\in\mathbb{B}$}
      \IF{papers $\{\pid\}_{\pid\in\bin}$ belong to more than one group}
      \STATE Define the set of groups in $\bin$ as $\{\group_u \succ \group_{u-1}\succ\dots \succ\group_v\}$, where $u\geq v$ and $u-v+1$ represents the number of groups in this quantization bin. Denote the score value for bin $\bin$ as~$\rating_\bin$. 
      \FOR{$t = v\dots u$}
      \STATE Set $\estscore_{\rid,\pid} = \rating_\bin + (t-v)\times \smallnum$, for all $\pid\in \group_t$. 
      \ENDFOR
      \STATE Adjust the values $\estscore_{\rid,\pid} = \estscore_{\rid,\pid}-\left( \frac{1}{\vert\{\pid: \pid\in\bin\} \vert}\sum_{\pid: \pid\in\bin}\estscore_{\rid,\pid} - z_\bin\right)$ for $\pid \in \bin$.
      \ENDIF
      \ENDFOR
    \ENDFOR
    \STATE Output $\{\estscore_{\rid\pid}\}_{ (\rid,\pid) \in \theset}$
  \end{algorithmic}
\end{algorithm}
\end{document}